\newcommand{\p}{\partial}
\newcommand{\dd}{{\rm d}}
\newcommand{\bd}{\begin{definition}}                %inizia definizione
\newcommand{\ed}{\end{definition}}                  %fine definizione
\newcommand{\bc}{\begin{corollary}}                 %inizia corollario
\newcommand{\ec}{\end{corollary}}                   %fine corollario
\newcommand{\bl}{\begin{lemma}}                     %inizia lemma
\newcommand{\el}{\end{lemma}}                       %fine lemma
\newcommand{\bp}{\begin{proposition}}            %inizia proposizione
\newcommand{\ep}{\end{proposition}}                %fine proposizione
\newcommand{\bere}{\begin{remark}}                  %inizia osservazione
\newcommand{\ere}{\end{remark}}                     %fine oservazione
\newcommand{\bt}{\begin{theorem}}
\newcommand{\et}{\end{theorem}}
\newcommand{\be}{\begin{equation}}
\newcommand{\ee}{\end{equation}}
\newcommand{\bit}{\begin{itemize}}
\newcommand{\eit}{\end{itemize}}
\newtheorem{theorem}{Theorem}[section]
\newtheorem{corollary}[theorem]{Corollary}
\newtheorem{lemma}[theorem]{Lemma}
\newtheorem{proposition}[theorem]{Proposition}
\theoremstyle{definition}
\newtheorem{definition}[theorem]{Definition}
\theoremstyle{remark}
\newtheorem{remark}[theorem]{Remark}
\newtheorem{example}[theorem]{Example}
\begin{document}

%\title{Finsler spacetimes have two-fold light cones}

%\title{On Cauchy hypersurfaces: An addendum to ``Causality theory for closed cone structures$\ldots$"}

%\title{On the relationship between Cauchy hypersurfaces and temporal functions in closed cone structures}

\title{On the regularity of Cauchy hypersurfaces and temporal functions in closed cone structures}

\author{E. Minguzzi\thanks{
Dipartimento di Matematica e Informatica ``U. Dini'', Universit\`a
degli Studi di Firenze, Via S. Marta 3,  I-50139 Firenze, Italy.
E-mail: ettore.minguzzi@unifi.it. } }

\date{}
%ORCiD: 0000-0002-8293-3802
%MSC2010: 83C75, 53C50, 53C60

\maketitle

\begin{abstract}
\noindent We complement our work on the causality of upper semi-continuous distributions of cones with some results on Cauchy hypersurfaces. We prove that every locally stably acausal Cauchy hypersurface is stable. Then we prove that the  signed distance $d_S$ from a spacelike hypersurface $S$ is, in a neighborhood of it, as regular as the  hypersurface, and by using this fact we give a proof that every Cauchy hypersurface is the   level set of a Cauchy temporal (and steep) function of the same regularity as the hypersurface.  We also show that in a globally hyperbolic closed cone structure compact spacelike hypersurfaces with boundary can be extended to Cauchy spacelike hypersurfaces of the same regularity. We end the work with a separation result and a density result.
\end{abstract}

%\keywords{Isotone functions; Cauchy hypersurfaces; Signed distance  }

%\ccode{Mathematics Subject Classification 2010: Primary 53Cxx, 83C75; \\ Secondary  34A60, 53C50, 93D30}

%\setcounter{tocdepth}{2}
%\tableofcontents
%

\section{Introduction}

In a recent work \cite{minguzzi17} we developed the causality theory for ``closed cone structures'' i.e.\ upper semi-continuous distributions $p\mapsto C_p$ of closed sharp non-empty convex cones. This theory developed as a generalization of causality theory for Lorentzian geometry \cite{hawking73,minguzzi18b}, the previous works on conic distributions being mostly concerned with the existence of various type of increasing functions \cite{fathi12,bernard18}.

Since causality theory is  quite an extensive subject the work \cite{minguzzi17} could only cover some portions of it. Still it was unfortunate that some important topics were left aside.
% when they could have been included with not too much effort.
Recently,  James Vickers and G\"unther H\"ormann asked me some questions on whether smooth spacelike Cauchy hypersurfaces are stable and whether they are level sets of smooth Cauchy temporal functions. These questions originated in their study of quantum field theory in globally hyperbolic  spacetime with non-$C^2$ metric \cite{hormann19}. A positive answer would generalize a well known theorem by Bernal and S\'anchez in Lorentzian geometry \cite[Thm.\ 1.2]{bernal06}.

The closely related question on whether globally hyperbolic closed cone structures admit  smooth  Cauchy temporal functions  was answered affirmatively in \cite{bernard18} and \cite{minguzzi17}  (the same question under continuity was answered in \cite{fathi12}), by using different smoothing techniques. This generalized another result by Bernal and S\'anchez \cite{bernal03}.

Bernard and Suhr used some smoothing techniques they developed in \cite{bernard18} that differ in philosophy from the approach introduced by Chru\'sciel, Grant and the author \cite{chrusciel13}, later refined in  \cite{minguzzi17}. Our approach consisted  in  first showing, through topological arguments, that the causal cones can be widened while preserving some key causality properties (e.g.\ global hyperbolicity or the Cauchy property of the hypersurface). This first  step makes the `room' needed for the subsequent step which consists in the smoothing of a suitable continuous increasing function. On the contrary Bernard and Suhr work out  the smoothing first, a step which is  more technically demanding, and then infer the stability properties more easily, but a posteriori, thanks to the obtained smoothness.

In a recent preprint Bernard and Suhr \cite{bernard19} show that the generalization of \cite[Thm.\ 1.2]{bernal06} can indeed be obtained via their smoothing technique, namely smooth spacelike Cauchy hypersurfaces are indeed level sets of smooth Cauchy temporal functions. It is natural to ask whether our approach \cite{minguzzi17} can also be used to answer this question. In this work we prove that it can and, as it turns out, the proof is conceptually simple as it consists in the single application of our smoothing theorem \cite[Thm.\ 3.9]{minguzzi17} to a signed distance function from the Cauchy hypersurface, cf.\ Thm.\ \ref{ngy}. Our analysis clarifies that in the non-smooth case the Cauchy temporal function  can be found of the same regularity as the hypersurface.

We start with the next section in which we prove  the equivalence between the locally stably acausal Cauchy hypersurfaces and the stable Cauchy hypersurfaces (Thm.\ \ref{sta}). This result, being  propaedeutic for the proof of the other results, is also quite interesting because it applies  to general Lipschitz hypersurfaces, not just to smooth ones. Then we introduce the notion of signed distance and obtain a result on its regularity (Thm.\ \ref{nja}) that is analogous to a not so well known result in Riemannian geometry. We show how to obtain the Cauchy temporal function  adapted to the Cauchy hypersurface (Thm.\ \ref{ngy}). Next, we prove a generalization of \cite[Thm.\ 1.1]{bernal06} namely that in a globally hyperbolic closed cone structure compact spacelike hypersurfaces with boundary can be extended to Cauchy spacelike hypersurfaces of the same regularity (Thm.\ \ref{ngz}). Finally, we prove a smooth separation result (Thm.\ \ref{mos}) and a density result (Thm.\ \ref{jif}) for stably causal closed cone structures.

Our conventions, terminology and notations are the same as in  \cite{minguzzi17}. The reader is assumed to be familiar with that work, particularly with the definition of closed (and proper) cone structure, and with the definition of closed (and proper) Lorentz-Finsler space.

%Here we just recall the notion of {\em closed cone structure} $(M,C)$, cf.\ \cite{minguzzi17} (the {\em closed cone field} introduced in \cite{bernard18} is more general while the restricted version used in \cite{bernard19} coincides with it), which is a distribution of closed sharp convex non-empty cones $p\mapsto C_p\subset T_pM\backslash 0$ such that $C=\cup_p C_p$ is closed in the topology of the slit tangent bundle $TM\backslash 0$. This last property is equivalent to the upper semi-continuity of the distribution $p\mapsto C_p$, see \cite[Prop.\ 2.3]{minguzzi17}. This structure preserves many results of causality theory  including the validity of the causal ladder  \cite[Thm.\ 2.47]{minguzzi17}. Notice that it is so general that it includes distributions of half-lines determined by continuous non-vanishing vector fields.

%In a closed cone structure one has a well defined notion of causal relation, but no useful notion of chronological relation.
%A sensible notion of chronological relation is obtained in a {\em proper cone structure} which is a closed cone structure for which  $(\textrm{Int} C)_p\ne \emptyset$ for every $p\in M$. Still, in these structures the identity $\p I^+(p)=\p J^+(p)$ or the identity $I\circ J\cup J\circ I \subset I$ do not necessarily hold, though they hold for locally Lipschitz cone structures \cite[Thm.\ 2.7]{minguzzi17}. Spacetimes, i.e.\ connected time-oriented Lorentzian manifolds, with $C^0$ metrics provide the simplest examples of proper cone structures.

\section{Stability, signed distance and smoothing}

We start with the topological proof that cones can be widened preserving the Cauchy property. In this way we shall make `the room' needed for the subsequent smoothing by convolution.

\subsection{Stability and Cauchy hypersurfaces}

We need a notion that replaces that of spacelikeness in a rough setting.
\begin{definition}
Let $(M,C)$ be a closed cone structure. A topological hypersurface
$S$ is {\em locally stably acausal} if every point $p\in S$ admits an open neighborhood $U$ and a locally
Lipschitz proper cone structure $C'>C$ on $U$, such that $S\cap U$ is acausal in $(U,C')$.
\end{definition}

On a  closed cone structure $(M,C)$ a $C^1$ hypersurface $S$ is spacelike  if for every $p\in S$, $T_pS\cap C_p=\emptyset$ (remember that $C_p$ does not include the zero vector). The $C^1$ spacelike hypersurfaces are the simplest examples of locally stably acausal hypersurfaces.

We recall some definitions from \cite{minguzzi17}.

\begin{definition}
Let $(M,C)$ be a closed cone structure. A Cauchy hypersurface is
an acausal topological hypersurface $S$ such that $D(S) = M$ (namely every inextendible continuous causal curve intersects $M$). A stable Cauchy hypersurface is a Cauchy hypersurface for $(M,C')$ where $C' > C$ is a locally Lipschitz
proper cone structure.
\end{definition}

Notice that if $C$ is a closed cone structure and $C'$ is a locally Lipschitz proper cone structure, then we can find an intermediate locally Lipschitz proper cone structure $C''$, $C<C''<C'$, cf.\ \cite[Thm.\ 2.23]{minguzzi17}. As a consequence, if $S$ is a stable Cauchy hypersurface for $(M,C)$ then we can find a locally Lipschitz proper cone structure $C''>C$, such that $S$ is a {\em stable} Cauchy hypersurface for $(M,C'')$.

Our first objective is to show that the locally stably acausal Cauchy hypersurfaces coincide with the stable Cauchy hypersurfaces.

By definition a closed cone structure  is globally hyperbolic if it is acausal and the causally convex hull of compact sets is compact \cite{minguzzi19c} (this definition improves the previous ones discussed in \cite{bernard18,minguzzi17}).

A closed cone structure that admits a Cauchy hypersurface is globally hyperbolic \cite[Thm.\ 2.43]{minguzzi17}. The converse holds as well \cite[Thm.\ 2.42]{minguzzi17} \cite{bernard18}, but we shall not use this fact, for we shall provide a new simple proof later on, cf.\ Thm.\ \ref{jao}.

\begin{lemma} \label{mfp}
Let $(M,C)$ be a closed cone structure. Let  $S$ be a locally stably acausal Cauchy hypersurface, then $J^+(S)$ and $J^-(S)$ are closed, $J^+(S)\cap J^-(S)=S$, $J^+(S)\cup J^-(S)=M$. The set $M\backslash S$ is disconnected, for it is the union of the disjoint non-empty open sets $J^+(S)\backslash S=M\backslash J^-(S)$ and $J^-(S)\backslash S=M\backslash J^+(S)$.
\end{lemma}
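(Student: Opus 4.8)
The plan is to first pin down the set‑theoretic decomposition of $M$ along $S$, then the topological statement that $J^{\pm}(S)$ are closed, and finally to deduce everything else by elementary point‑set manipulations. Since $S$ is a Cauchy hypersurface, $(M,C)$ is globally hyperbolic by \cite[Thm.\ 2.43]{minguzzi17}, hence in particular causal (no nontrivial closed causal curves); I shall freely use the limit curve theorem of \cite{minguzzi17} and the fact that in a globally hyperbolic closed cone structure the causal diamonds $J^+(K_1)\cap J^-(K_2)$ are compact whenever $K_1,K_2$ are compact. To see $M=J^+(S)\cup J^-(S)$: every $p\in M$ lies on some inextendible continuous causal curve $\eta$, and by the Cauchy property $\eta$ meets $S$ at a point $q$; according to whether $q$ comes after, before, or at $p$ along $\eta$ we get $p\in J^-(S)$, $p\in J^+(S)$, or $p\in S$. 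To see $J^+(S)\cap J^-(S)=S$: the inclusion $\supseteq$ is immediate from reflexivity of $\leq$, and conversely, if $s_1\leq p\leq s_2$ with $s_1,s_2\in S$ but $p\notin S$, then joining a causal curve from $s_1$ to $p$ with one from $p$ to $s_2$ produces a nontrivial causal curve through $p$ connecting $s_1$ and $s_2$, which contradicts the acausality of $S$ when $s_1\neq s_2$ and the causality of $(M,C)$ when $s_1=s_2$ (it would then be a nontrivial closed causal curve through $p\neq s_1$).

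The key step is that $J^+(S)$ is closed; the closedness of $J^-(S)$ then follows by applying the same argument to the reversed cone structure $p\mapsto -C_p$, for which $S$ is again a locally stably acausal Cauchy hypersurface and whose future $J^+$ coincides with the original $J^-$. So let $p_n\to p$ with $p_n\in J^+(S)$, and pick $s_n\in S$ with $s_n\leq p_n$, realized by causal curves $\gamma_n$ from $s_n$ to $p_n$. Set $K:=\{p\}\cup\{p_n:n\in\mathbb{N}\}$, a compact set. Using that $S$ is a Cauchy hypersurface in a globally hyperbolic closed cone structure, the set $S\cap J^-(K)$ is compact (a standard property of Cauchy hypersurfaces, cf.\ \cite{minguzzi17}); since $s_n\in S\cap J^-(K)$, after passing to a subsequence $s_n\to s\in S\cap J^-(K)\subseteq S$. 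Moreover $\gamma_n\subseteq J^+(s_n)\cap J^-(p_n)\subseteq J^+\big(S\cap J^-(K)\big)\cap J^-(K)$, a causal diamond of two compact sets and hence a fixed compact set, so by the limit curve theorem a subsequence of the $\gamma_n$ converges to a causal curve from $s$ to $p$ (and if that limit is trivial then $s=p\in S$ anyway). In every case $p\in J^+(s)\subseteq J^+(S)$, so $J^+(S)$ is closed, and dually so is $J^-(S)$.

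The remaining assertions are now formal. From $M=J^+(S)\cup J^-(S)$ and $J^+(S)\cap J^-(S)=S$ we get $J^+(S)\setminus S=J^+(S)\setminus J^-(S)=M\setminus J^-(S)$ and, symmetrically, $J^-(S)\setminus S=M\setminus J^+(S)$; both are open because $J^-(S)$ and $J^+(S)$ are closed. They are disjoint, since $(J^+(S)\setminus S)\cap(J^-(S)\setminus S)=(J^+(S)\cap J^-(S))\setminus S=\emptyset$, and their union is $(J^+(S)\cup J^-(S))\setminus S=M\setminus S$. Each is nonempty: given $q\in S$ there is a nonconstant future‑directed causal curve $\sigma$ issuing from $q$, and for small $t>0$ one has $\sigma(t)\neq q$ by causality and $\sigma(t)\notin S$ by acausality of $S$, so $\sigma(t)\in J^+(S)\setminus S$; the case of $J^-(S)\setminus S$ is symmetric. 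Hence $M\setminus S$ is the union of two disjoint, nonempty, open sets, so it is disconnected.

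The main obstacle is the closedness of $J^+(S)$, equivalently the compactness of $S\cap J^-(K)$. In the Lorentzian or proper‑cone setting this is classical, essentially because $I^+(S)$ is then open and $J^+(S)$ is its closure; for a genuine closed cone structure, however, $I^+(S)$ need not be open (it may even be empty), so no such shortcut exists and one has to argue through the limit curve theorem and the compactness behaviour of Cauchy hypersurfaces proved in \cite{minguzzi17} — which is also where the local stable acausality of $S$ enters, by supplying the nearby proper cone structure needed to run those arguments. The rest of the proof is soft point‑set topology.
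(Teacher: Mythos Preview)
Your set–theoretic steps ($J^+(S)\cup J^-(S)=M$, $J^+(S)\cap J^-(S)=S$, and the final disconnection argument) are fine and match the paper. The divergence is in the proof that $J^+(S)$ is closed, and there your argument has a genuine gap.

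You reduce closedness to the compactness of $S\cap J^-(K)$ for compact $K$, which you assert as ``a standard property of Cauchy hypersurfaces, cf.\ \cite{minguzzi17}''. But in this very paper the analogous statement (compactness of $J^-(K)\cap J^+(S)$) is formulated as a separate Proposition and proved \emph{after} Lemma~\ref{mfp}, and that proof explicitly invokes the openness of $J^-(S)\setminus S$ established in Lemma~\ref{mfp}. So your citation is circular: the compactness you need is, in the closed cone setting, a \emph{consequence} of the lemma you are trying to prove, not an input to it. Your final paragraph even identifies this as ``the main obstacle'', yet it is not actually discharged anywhere in the proposal. In a smooth Lorentzian or proper cone structure one could indeed appeal to the openness of $I^+(S)$ to get this for free, but---as you yourself note---that route is unavailable here.

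By contrast, the paper's argument for closedness is self-contained and does not presuppose any compactness of $S\cap J^-(K)$. It argues by contradiction: if $p\in\overline{J^+(S)}\setminus J^+(S)$ then a limit curve argument produces a past-inextendible causal curve $\eta$ ending at $p$ and missing $S$; a second limit curve argument shows that some compact neighborhood $K\ni p$ has the property that every future-inextendible causal curve from $K$ reaches $S$ (here \cite[Thm.~2.33]{minguzzi17}, which relies on local stable acausality, is used to pass the crossing of $S$ from the limit curve to nearby curves); picking $q\in K\cap J^+(S)$ then violates acausality of $S$. If you want to salvage your route, you would need to supply an independent proof of the compactness of $S\cap J^-(K)$ that does not already use the conclusions of Lemma~\ref{mfp}.
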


\begin{proof}
Through every point there passes an inextendible continuous causal curve intersecting $S$, thus $J^+(S)\cup J^-(S)=M$. The acausality of $S$ implies $J^+(S)\cap J^-(S)=S$.

Suppose that there is a point $p\in \overline{J^+(S)}\backslash J^+(S)$ then necessarily $p\in J^-(S)\backslash S$.
 Since $p\in \overline{J^+(S)}\backslash J^+(S)$ by a standard limit curve argument \cite[Thm.\ 2.14]{minguzzi17} there is a past inextendible continuous causal curve $\eta$ ending at $p$ and not intersecting $S$.
 There must exist a compact neighborhood $K\ni p$ such that every future inextendible continuous causal curve issued from $K$ reaches $S$. Indeed, suppose not, then we can find a sequence of future inextendible continuous causal curves $\sigma_n$ with staring point $p_n\to p$, not intersecting $S$. By the limit curve theorem there is a limit future inextendible continuous causal curve $\sigma$ to which a subsequence (here denoted in the same way) $\sigma_n$ converges (uniformly on compact subsets for a suitable parametrization, see details in \cite[Thm.\ 2.14]{minguzzi17}). If $\sigma$ intersects (and hence crosses) $S$ then so do $\sigma_n$ for sufficiently large $n$ (due to \cite[Thm.\ 2.33]{minguzzi17}), which is a contradiction. If $\sigma$ does not intersect $S$, then $\sigma \circ \eta$ provides an inextendible continuous causal curve passing through $p$ not intersecting $S$, a contradiction.
 Thus $K$ exists, but then since $p\in \overline{J^+(S)}$ we can find $q\in K\cap J^+(S)$. Once any future inextendible continuous causal curve starting from $q$ is chosen  (which exists by \cite[Thm.\ 2.1]{minguzzi17}), it intersects $S$, which proves that $S$ is not acausal. The contradiction proves that $J^+(S)$ is closed.

Since through every point of $S$ there passes an inextendible continuous causal curve the sets  $J^\pm(S)\backslash S$ are indeed non-empty thus $M\backslash S$ is indeed disconnected as stated.
\end{proof}

\begin{lemma} \label{mfo}
Let $(M,C)$ be a closed cone structure. Let  $S$ be a locally stably acausal Cauchy hypersurface and let $\tilde C>C$ be  a locally
Lipschitz proper cone structure  chosen  with so narrow cones  that every point $p\in S$ admits a neighborhood $U$ such that $S\cap U$ is acausal in $(U,\tilde C)$. Then $S$ is acausal in $(M,\tilde C)$.
\end{lemma}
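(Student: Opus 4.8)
The plan is to argue by contradiction using a local-to-global argument via the limit curve theorem, leveraging that $S$ is already known to be a *topological* Cauchy hypersurface for $(M,C)$ and hence, by Lemma \ref{mfp}, that $M = J^+(S)\cup J^-(S)$ with the two halves closed and meeting exactly in $S$. So suppose there is a $\tilde C$-causal curve $\gamma:[0,1]\to M$ with $\gamma(0),\gamma(1)\in S$ and $\gamma(0)\ne\gamma(1)$ (the case of a closed $\tilde C$-causal curve through a point of $S$ is handled the same way, noting $\tilde C>C$ keeps the structure causal if we have chosen $\tilde C$ narrow enough, or is excluded directly). Since $\tilde C > C$, $\gamma$ is in particular a $\tilde C$-causal curve in $(M,\tilde C)$, and the local acausality hypothesis says each point of $S$ has a neighborhood $U$ in which $S\cap U$ is $\tilde C$-acausal; so $\gamma$ cannot stay near $S$ — it must ``leave'' a neighborhood of $S$ between its endpoints.

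First I would reduce to showing that $\gamma$ must cross $S$ at some interior parameter, which already contradicts local acausality (take a small piece of $\gamma$ near such a crossing point, contained in the corresponding neighborhood $U$: both the crossing and, say, the endpoint $\gamma(0)$—after further localizing—would lie in a $\tilde C$-acausal $S\cap U$, yet be joined by a nontrivial $\tilde C$-causal curve). Concretely: without loss of generality $\gamma(0)\in S$ and $\gamma(t)\notin S$ for $t\in(0,\varepsilon)$; by Lemma \ref{mfp}, $M\setminus S$ is the disjoint union of the open sets $J^+(S)\setminus S$ and $J^-(S)\setminus S$, so $\gamma|_{(0,\varepsilon)}$ lies entirely in one of them. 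Because $\gamma$ is future $\tilde C$-causal and $\tilde C>C$, the local structure of these sets near $S$ forces $\gamma$ to enter $J^+(S)\setminus S$ (a $\tilde C$-causal, hence $C$-causal, curve issued from a point of $S$ enters $J^+(S)$). Then, since $\gamma$ eventually returns to $\gamma(1)\in S\subset J^-(S)$, it must at some parameter $t^*\in(0,1)$ pass from $J^+(S)\setminus S$ back through $S$: i.e. $\gamma(t^*)\in S$ with $\gamma(t)\in J^+(S)\setminus S$ for $t$ slightly less than $t^*$. But this means the restriction of $\gamma$ to $[t^*-\delta,t^*]$, for small $\delta$, together with the fact that it starts in $J^+(S)\setminus S = M\setminus J^-(S)$ and ends on $S$, violates acausality of $S$ in $(M,C)$ — except we want the contradiction with $\tilde C$-local acausality, which we get by noting the whole short arc lies in a neighborhood $U$ of $\gamma(t^*)$ with $S\cap U$ $\tilde C$-acausal, and the arc meets $S\cap U$ at both $\gamma(t^*)$ and (after shrinking) a nearby crossing, or else produces a point of $S\cap U$ that is in $J^{+}_{\tilde C,U}$-relation with another — contradiction.

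The cleaner route, which I would actually write, is this: use the limit curve / covering argument of Lemma \ref{mfp}. Cover $S$ by the neighborhoods $\{U_p\}_{p\in S}$ on which $S\cap U_p$ is $\tilde C$-acausal. The key point is that a $\tilde C$-causal curve $\gamma$ starting on $S$ cannot meet $S$ again while remaining in a single $U_p$; so if $\gamma$ meets $S$ twice it must travel a definite ``distance'' away from $S$ in between, which one controls by compactness of $\gamma([0,1])$ and the structure of $J^\pm(S)$. Using that $J^+(S)$ and $J^-(S)$ are closed (Lemma \ref{mfp}) and that $\tilde C$-causal curves are $C$-causal, one shows $\gamma([0,1])\subset J^+(S)$ (it starts on $S\subset J^+(S)$ and stays future-directed), hence $\gamma(1)\in J^+(S)\cap S$; but then reversing, $\gamma([0,1])\subset J^-(S)$ too, so $\gamma([0,1])\subset J^+(S)\cap J^-(S)=S$. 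Thus $\gamma$ is a $\tilde C$-causal curve lying entirely in $S$; covering this compact curve by finitely many $U_p$, on each of which $S\cap U_p$ is $\tilde C$-acausal, and concatenating, we find two distinct points of $S$ on $\gamma$ lying in a common $U_p$ — contradicting $\tilde C$-acausality of $S\cap U_p$. Hence no such $\gamma$ exists and $S$ is $\tilde C$-acausal in $(M,\tilde C)$.

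The main obstacle is the step asserting $\gamma([0,1])\subset J^+(S)\cap J^-(S)=S$: one must be careful that ``$\gamma$ starts on $S$ and is future $\tilde C$-causal'' really puts $\gamma$ inside $J^+(S)$ using only $C$-causality (it does, since $\tilde C>C$), and symmetrically from the other endpoint; this is exactly where Lemma \ref{mfp}'s conclusion that $J^\pm(S)$ are closed with $J^+(S)\cap J^-(S)=S$ is essential. The remaining compactness/covering argument is routine, and the reduction to a curve between two endpoints (versus a closed curve) is handled by noting $\tilde C$ can be taken causal and by the same trapping argument.
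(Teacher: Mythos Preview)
There is a genuine gap: you have reversed the direction of the causal inclusion. The relation $\tilde C>C$ means that the cones of $\tilde C$ are \emph{wider} than those of $C$ (indeed $C_p\subset \textrm{Int}\,\tilde C_p$), so every $C$-causal curve is $\tilde C$-causal, not the other way round. Your parenthetical ``a $\tilde C$-causal, hence $C$-causal, curve'' and the later ``$\tilde C$-causal curves are $C$-causal'' are both false, and this is exactly what makes the lemma nontrivial: if the implication went the way you claim, $C$-acausality of $S$ would immediately yield $\tilde C$-acausality and there would be nothing to prove. This is fatal for your ``cleaner route'': the assertion $\gamma([0,1])\subset J^+(S)$ rests entirely on $\gamma$ being $C$-causal, which it need not be, and a $\tilde C$-causal curve can perfectly well leave $J^+(S)$ without touching $S$. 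The conclusion $\gamma([0,1])\subset J^+(S)\cap J^-(S)=S$ is therefore unjustified --- and in fact impossible, since local $\tilde C$-acausality forces $\gamma$ to \emph{leave} $S$ at once.

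Your first route is structurally closer to a correct argument, but the same mistaken implication appears when you justify that $\gamma$ enters $J^+(S)\setminus S$ near $\gamma(0)$. The correct reason is not that $\gamma$ is $C$-causal; it is a local crossing property: since $S\cap U$ is $\tilde C$-acausal in a neighborhood $U$ of $\gamma(0)$, the $\tilde C$-causal curve $\gamma$ must pass to one side of $S$, and that this side is $J^+(S)\setminus S$ (and dually $J^-(S)\setminus S$ near $\gamma(1)$) is what \cite[Thm.\ 2.33]{minguzzi17} provides. Once this is in hand the paper finishes with pure connectedness: assuming $\gamma$ meets $S$ only at the endpoints, the sets $\gamma^{-1}(M\setminus J^+(S))$ and $\gamma^{-1}(M\setminus J^-(S))$ are open, disjoint, nonempty, and cover $(0,1)$, a contradiction. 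No interior crossing of $S$ and no compactness/covering step are needed.
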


\begin{proof}
Suppose that there is a $\tilde C$-causal curve $\gamma\colon [0,1]\to M$, intersecting $S$ just at the endpoints.
The sets $M\backslash J^\pm(S)$ are open, disjoint, and $[M\backslash J^+(S)]\cup [M\backslash J^-(S)]=M\backslash S$, thus $\gamma^{-1}(M\backslash J^+(S))$ and $\gamma^{-1}(M\backslash J^-(S))$ are two open disjoint sets whose union gives $(0,1)$.
By the local stable acausality of $S$ there are $0<\alpha<\beta<1$ such that $\gamma((0,\alpha))\subset J^+(S)\backslash S \subset M\backslash J^-(S)$ and $\gamma((\beta,1))\subset J^-(S)\backslash S\subset M\backslash J^+(S)$ (see \cite[Thm.\ 2.33]{minguzzi17}) which proves that $\gamma^{-1}(M\backslash J^+(S))$ and $\gamma^{-1}(M\backslash J^-(S))$ are non-empty
 in contradiction with the connectedness of $(0,1)$.
%By the local stable acausality of $S$ there are $0<\alpha<\beta<1$ such that $\gamma((0,\alpha))\subset J^+(S)\backslash S$ and $\gamma((\beta,1))\in J^-(S)\backslash S$ (see \cite[Thm.\ 2.33]{minguzzi17}). Let $a>0$ be the largest value for which the inclusion $\gamma((0,a))\subset J^+(S)\backslash S$ holds. We have  $\gamma(a)\in \p J^+(S)\backslash S$ thus either $\gamma(a)\in J^+(S)$ or $\gamma(a)\in \overline{ J^+(S)}\backslash J^+(S)$.
 %By global hyperbolicity $\gamma((0,1))\subset J^+(S)\cup J^-(S)$, thus in the former case there is a sequence $a_n\downarrow a$ such that $\gamma(a_n) \in J^-(S)$ and hence $\gamma(a)\in [J^+(S)\backslash S]\cap \overline{J^-(S)}\backslash J^-(S)$ (by acausality); in the latter case we have the dual inclusion $\gamma(a)\in [J^-(S)\backslash S]\cap \overline{J^+(S)}\backslash J^+(S)$. This is impossible because $J^+(S)$ and $J^-(S)$ are closed by Lemma \ref{mfp}.
 The contradiction proves that $\gamma$ does not exist.
\end{proof}

The next result was almost completely proved in \cite[Thm.\ 2.41]{minguzzi17}, where it was shown that the cones can be widened preserving the property that every inextendible continuous causal curve intersects $S$. Unfortunately, the proof that $S$ was acausal was missing.

\begin{theorem} \label{sta}
Let $(M,C)$ be a closed cone structure. The locally stably acausal Cauchy hypersurfaces coincide with the stable Cauchy hypersurfaces.
\end{theorem}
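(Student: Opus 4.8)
The plan is to prove the two inclusions separately; one is essentially trivial and the other is where the real work (already largely done by Lemmas \ref{mfp} and \ref{mfo}) gets assembled.

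First, suppose $S$ is a stable Cauchy hypersurface. Then by definition there is a locally Lipschitz proper cone structure $C'>C$ for which $S$ is a Cauchy hypersurface of $(M,C')$; in particular $S$ is acausal in $(M,C')$, hence a fortiori acausal in $(U,C')$ for every open $U$. Taking $U=M$ (or any open neighbourhood of a given point) and this same $C'$ as the witnessing cone structure in the definition of local stable acausality shows that $S$ is locally stably acausal. Of course $S$ is also a Cauchy hypersurface of $(M,C)$ since $C<C'$ makes every $C$-causal curve a $C'$-causal curve, so $D_C(S)\supseteq D_{C'}(S)=M$, and acausality in $(M,C')$ forces acausality in $(M,C)$. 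Hence $S$ is a locally stably acausal Cauchy hypersurface.

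For the converse, let $S$ be a locally stably acausal Cauchy hypersurface. The point is to produce a single locally Lipschitz proper cone structure $\tilde C>C$ for which $S$ is simultaneously acausal and satisfies $D_{\tilde C}(S)=M$. The acausality part is exactly Lemma \ref{mfo}: local stable acausality gives, for each $p\in S$, a neighbourhood $U_p$ and a locally Lipschitz proper $C'_p>C$ on $U_p$ making $S\cap U_p$ acausal there, and by shrinking (using an intermediate cone structure as in \cite[Thm.\ 2.23]{minguzzi17}, together with a partition-of-unity / patching argument over the $U_p$) one obtains a \emph{global} locally Lipschitz proper $\tilde C>C$ whose cones are narrow enough that $S\cap U_p$ remains acausal in $(U_p,\tilde C)$ for every $p$; then Lemma \ref{mfo} upgrades this to acausality of $S$ in $(M,\tilde C)$. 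For the Cauchy/domain-of-development part I would invoke \cite[Thm.\ 2.41]{minguzzi17}, which (as noted in the excerpt) already establishes that the cones can be widened so that every inextendible continuous $\tilde C$-causal curve still intersects $S$; one just has to be sure that the narrowing required for acausality (Lemma \ref{mfo}) is compatible with — i.e.\ can be performed simultaneously with, or can be taken inside — the widening allowed by \cite[Thm.\ 2.41]{minguzzi17}. Since both conditions only require $\tilde C$ to lie in a suitable open ``band'' of cone structures strictly between $C$ and something slightly larger, their intersection is again such a band and is non-empty by \cite[Thm.\ 2.23]{minguzzi17}. Choosing $\tilde C$ in this common band makes $S$ a Cauchy hypersurface of $(M,\tilde C)$, i.e.\ a stable Cauchy hypersurface of $(M,C)$.

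The main obstacle is the globalization in the converse direction: the definition of locally stably acausal only supplies \emph{local} widened cone structures $C'_p$, possibly incompatible on overlaps, so one must glue them into one global locally Lipschitz proper cone structure $\tilde C>C$ that is at the same time narrow enough for Lemma \ref{mfo} to apply and still admits \cite[Thm.\ 2.41]{minguzzi17}'s conclusion. I expect this to be handled by a standard partition-of-unity construction in the space of cone structures (taking convex combinations of the local cones with $C$), combined with the intermediate-cone-structure result \cite[Thm.\ 2.23]{minguzzi17} to absorb the finitely-many or locally-finitely-many narrowing constraints; once $\tilde C$ is in hand the rest is a direct citation of the two lemmas and \cite[Thm.\ 2.41]{minguzzi17}.
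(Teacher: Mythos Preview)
Your proposal is correct and follows essentially the same route as the paper: the easy direction is immediate, and for the converse you combine Lemma~\ref{mfo} (to promote local stable acausality to global $\tilde C$-acausality) with \cite[Thm.~2.41]{minguzzi17} (which already supplies the widening preserving the intersection property), exactly as the paper does. You are more explicit than the paper about two points it leaves implicit, namely the globalization of the local cone structures $C'_p$ into a single $\tilde C>C$ satisfying the hypothesis of Lemma~\ref{mfo}, and the compatibility of this $\tilde C$ with the one furnished by \cite[Thm.~2.41]{minguzzi17}; your resolution of both via \cite[Thm.~2.23]{minguzzi17} and a locally finite patching is the intended one.
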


\begin{proof}
To the left the implication is clear. For the other direction, by Lemma \ref{mfo} the  locally stably acausal Cauchy hypersurface $S$ is actually stably acausal, thus by  \cite[Thm.\ 2.41]{minguzzi17} it is a stable Cauchy hypersurfaces.
\end{proof}

\begin{example}
Not all Cauchy hypersurfaces are locally stably acausal: consider Minkowski 1+1 spacetime of metric $g=-\dd t^2+\dd x^2$ and the Cauchy hypersurface $S$ given by the graph of the function $t=\tanh x$.  The hypersurface $S$ is not locally stably acausal in a neighborhood of $(0,0)$. Notice that $S$ is $C^1$ but the tangent space at the origin is null.
\end{example}

Proposition 2.22 in \cite{minguzzi17} now simplifies as follows

\begin{proposition}
Let $(M,C)$ be a closed cone structure.
Any two locally stably acausal $C^k$, $0\le k \le \infty$, Cauchy hypersurfaces are $C^k$ diffeomorphic.
\end{proposition}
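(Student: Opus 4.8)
The plan is to reduce this proposition to the known non-stable version, Proposition 2.22 in \cite{minguzzi17}, by upgrading the local stable acausality to global stable acausality and then passing to a single widened cone structure in which both hypersurfaces are Cauchy. Concretely, let $S_1$ and $S_2$ be two locally stably acausal $C^k$ Cauchy hypersurfaces of $(M,C)$. First I would invoke Theorem \ref{sta} twice: $S_1$ is a stable Cauchy hypersurface, hence a Cauchy hypersurface for some locally Lipschitz proper cone structure $C_1>C$, and likewise $S_2$ is a Cauchy hypersurface for some locally Lipschitz proper $C_2>C$.

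Next I would produce a common widened cone structure. Using \cite[Thm.\ 2.23]{minguzzi17} (the existence of intermediate proper cone structures), or more directly the fact that the intersection-type construction of \cite{minguzzi17} allows one to find a locally Lipschitz proper $C'$ with $C<C'<C_1$ and $C<C'<C_2$ simultaneously (choosing $C'$ with sufficiently narrow cones near each point so that it sits below both $C_1$ and $C_2$), I obtain a single locally Lipschitz proper cone structure $C'$ for which \emph{both} $S_1$ and $S_2$ are Cauchy hypersurfaces: indeed a Cauchy hypersurface for $C_i$ remains acausal and satisfies $D(S_i)=M$ for the smaller cone structure $C'$, since narrowing cones only shrinks the family of causal curves (acausality is preserved, and every inextendible $C'$-causal curve, being also $C_i$-causal, still meets $S_i$ — here one uses that $C'$-inextendibility implies $C_i$-inextendibility, or inextendibility of the maximal $C_i$-extension, a point already handled implicitly in the proof of Theorem \ref{sta}). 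Since $C'$ is itself a proper (in particular locally Lipschitz) cone structure, $S_1$ and $S_2$ are honest smooth-setting Cauchy hypersurfaces of $(M,C')$.

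Finally I would apply the original Proposition 2.22 of \cite{minguzzi17}, which asserts that any two Cauchy hypersurfaces of a (Lipschitz/proper) cone structure are $C^k$ diffeomorphic when they are $C^k$; applied to $(M,C')$ this yields that $S_1$ and $S_2$ are $C^k$ diffeomorphic, as desired. The phrase ``now simplifies as follows'' in the text signals exactly this: the hypothesis of Proposition 2.22 that previously had to be phrased in terms of a cone structure for which the hypersurfaces are already Cauchy can now be replaced by the more intrinsic and checkable condition of local stable acausality, thanks to Theorem \ref{sta}.

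The main obstacle I anticipate is the second step: verifying that one can choose a \emph{single} intermediate locally Lipschitz proper $C'$ simultaneously below the two given widenings $C_1$ and $C_2$, and that the Cauchy property of each $S_i$ genuinely descends from $(M,C_i)$ to $(M,C')$. The descent of acausality is immediate, but the descent of the domain-of-dependence condition $D(S_i)=M$ requires care with inextendibility of causal curves under cone narrowing; this is precisely the kind of limit-curve/causal-curve bookkeeping carried out in Lemmas \ref{mfp}–\ref{mfo} and in \cite[Thm.\ 2.41]{minguzzi17}, so it should go through without new ideas, but it is the technical heart of the argument. Everything else is a direct citation.
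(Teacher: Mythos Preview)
Your approach is correct and matches the paper's: the paper gives no proof at all, merely noting that the proposition is the simplified restatement of \cite[Prop.\ 2.22]{minguzzi17} now that Theorem \ref{sta} identifies locally stably acausal Cauchy hypersurfaces with stable ones. Your fleshed-out argument (pass to a common widened $C'$ and apply the earlier proposition there) is exactly the content behind that citation.

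One remark on the ``main obstacle'' you flag: the worry about inextendibility under cone narrowing is not a genuine difficulty. For a proper cone structure a continuous causal curve is future (resp.\ past) inextendible if and only if it has no future (resp.\ past) endpoint in $M$, and the latter notion depends only on the curve and on $M$, not on the cones. Hence a $C'$-inextendible $C'$-causal curve is automatically $C_i$-inextendible as a $C_i$-causal curve, and the descent $D_{C_i}(S_i)=M \Rightarrow D_{C'}(S_i)=M$ is immediate. So step 2 requires no limit-curve bookkeeping.
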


%\begin{corollary}
%Let $(M,g)$ be a spacetime with $C^0$ Lorentzian metric $g$ and let $S$ be a smooth spacelike Cauchy hypersurface. Then there is a Cauchy temporal function $t$ such that $t^{-1}(0)=S$.
%\end{corollary}

%\begin{proof}
%The cones can be widened while preserving the fact that $S$ is Cauchy. As a consequence we can find  a smooth metric $\bar g>g$ which is still globally hyperbolic with Cauchy hypersurface $S$. By the result \cite[Thm.\ 1.2]{bernal06} we can find a Cauchy temporal function $t$ for $(M,\bar g)$ such that $t^{-1}(0)=S$, which is then a Cauchy temporal function  for $(M,g)$.
%\end{proof}

%\begin{proposition}
%Let $(M,C)$ be a closed cone structure. Every locally stably acausal Cauchy hypersurface $S$ is the zero level set of a Cauchy time function, which is smooth Cauchy temporal outside $S$.
%\end{proposition}

%\begin{proof}
%The open set $M\backslash J^\pm(S)=J^\mp(S)\backslash S$ is globally hyperbolic, so let $\tau^\mp$ be a smooth Cauchy temporal function.
%Let $\varphi^+\colon
%\mathbb{R} \to (0,+\infty)$ and $ \varphi^-\colon
%\mathbb{R} \to (-\infty, 0)$ be smooth increasing  functions, then $\tau$ defined by $ \tau=\varphi^+ \circ \tau^++\varphi^- \circ \tau^-$ on $M\backslash S$, $\tau=0$ on $ S$, is a smooth Cauchy temporal function outside $S$. Moreover, it is a Cauchy time function since all level sets are Cauchy and it increases over every continuous causal curve.
%\end{proof}

\subsection{The signed distance function and its regularity} \label{nog}

The proof of the next result is similar to the Lorentzian one.
\begin{proposition}
Let $(M,C)$ be a closed cone structure and let $S$ be a locally stably acausal Cauchy hypersurface. Then for every compact subset $K$, $J^-(K)\cap J^+(S)$
is compact.
\end{proposition}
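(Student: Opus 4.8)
The plan is to reduce the statement to the global hyperbolicity of $(M,C)$ together with the fact, established in Lemma \ref{mfp}, that $J^+(S)$ is closed and $J^+(S)\cup J^-(S)=M$. Recall that a closed cone structure admitting a Cauchy hypersurface is globally hyperbolic by \cite[Thm.\ 2.43]{minguzzi17}, so in particular the causally convex hull $\hat A$ of any compact set $A$ is compact; also $J^\pm$ of a compact set is closed \cite[Thm.\ 2.14]{minguzzi17}.

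First I would observe that $J^-(K)\cap J^+(S)$ is closed: $J^-(K)$ is closed since $K$ is compact, and $J^+(S)$ is closed by Lemma \ref{mfp}. So it suffices to show the set is contained in a compact set. Pick a point $s\in S$; since $S$ is a Cauchy hypersurface, $M=J^+(S)\cup J^-(S)$, and one can in fact choose for each point $q\in J^-(K)\cap J^+(S)$ a causal curve from $S$ through $q$ to $K$. I would then try to trap the set inside the causally convex hull of $K\cup \{s\}$, or more robustly inside $J^+(A)\cap J^-(K)$ for a suitable compact $A\subset S$: indeed any $q$ in the set lies on a causal curve issued from some point $p\in S$ and reaching $K$, and if the relevant portion of $S$ can be taken compact then $q\in J^+(A)\cap J^-(K)$, which is compact by global hyperbolicity (it is closed, being an intersection of closed sets, and contained in the compact causally convex hull of $A\cup K$).

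The main obstacle is therefore showing that the past endpoints on $S$ of causal curves reaching $K$ through points of $J^-(K)\cap J^+(S)$ stay in a compact subset $A$ of $S$. I would handle this by contradiction using a limit curve argument: if not, there is a sequence $q_n\in J^-(K)\cap J^+(S)$ with causal curves $\gamma_n$ from $p_n\in S$ to $q_n$ to $k_n\in K$, with $p_n$ escaping every compact subset of $S$. Passing to the limit (after extracting so that $k_n\to k\in K$), the limit curve theorem \cite[Thm.\ 2.14]{minguzzi17} yields a past inextendible continuous causal curve ending at $k$ that does not meet $S$ (or meets it only in a way contradicting acausality via \cite[Thm.\ 2.33]{minguzzi17}), contradicting that $S$ is a Cauchy hypersurface. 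Once $A$ is found, compactness of $J^+(A)\cap J^-(K)$ finishes the proof, and hence $J^-(K)\cap J^+(S)$, being a closed subset of it, is compact.
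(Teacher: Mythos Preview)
Your approach is essentially the paper's limit curve argument, with an added detour: you first reduce to showing that $S\cap J^-(K)$ lies in a compact $A\subset S$ and then invoke global hyperbolicity to conclude $J^-(K)\cap J^+(S)\subset J^+(A)\cap J^-(K)$ is compact. The paper skips this reduction and argues directly that if $J^-(K)\cap J^+(S)$ were noncompact, points on the connecting curves would escape to infinity; since global hyperbolicity is already available (it is precisely what makes your reduction work), the detour buys nothing, though it is not wrong.

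Where your sketch is imprecise is the contradiction step. You write that the past inextendible limit curve $\sigma$ ending at $k$ ``does not meet $S$ (or meets it only in a way contradicting acausality).'' The first alternative never occurs: since $S$ is Cauchy, $\sigma$ \emph{must} meet $S$. The actual contradiction---and this is how the paper phrases it---is obtained one step further: because $S$ is locally stably acausal, $\sigma$ does not merely meet $S$ but \emph{crosses} it (cf.\ \cite[Thm.\ 2.33]{minguzzi17}) into the open set $J^-(S)\backslash S$; hence a subsequence of the approximating curves $\gamma_n$ must also enter $J^-(S)\backslash S$. But each $\gamma_n$ starts on $S$ and is future directed, so it lies in $J^+(S)$, and this is what contradicts the acausality of $S$. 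In short, the contradiction comes from the behavior of the approximating curves, not from the limit curve alone; your write-up should make this explicit.
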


\begin{proof}
Otherwise there is a sequence of continuous casual curves $\sigma_n$ connecting $S$ to $K$ and points $p_n\in \sigma_n$, $p_n\to \infty$ (i.e. escaping every compact set). A standard  limit curve argument \cite[Thm.\ 2.14]{minguzzi17} would produce a past inextendible continuous causal curve $\sigma$ ending at $K$. Thus by the stable local acausality of $S$ it would cross $S$
entering, \cite[Thm.\ 2.33]{minguzzi17}, the open set $J^-(S)\backslash S$. But then a subsequence
$\sigma_k \to \sigma$ starting from $S$, would have to enter it, contradicting the acausality
of $S$.
\end{proof}

\begin{proposition}
Let $(M,\mathscr{F})$ be a locally Lipschitz proper Lorentz-Finsler space such that $\mathscr{F}(\p C)=0$, and let $S$ be a locally stably acausal Cauchy hypersurface. The functions $d^+_S\colon J^+(S) \to [0,+\infty)$, $d^+_S(p):=d(S,p)$, and the function $d^-_S\colon J^-(S) \to (-\infty,0]$, $d^-_S(p):=-d(p,S)$ are finite and continuous. Similarly the function $d_S$ which coincides with $d^+_S$ on $J^+(S)$ and with $d^-_S$ on $J^-(S)$, is finite and continuous.
\end{proposition}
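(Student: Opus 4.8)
The plan is to deduce finiteness and continuity of $d^+_S$ from three ingredients: the compactness statement just proved, the closedness of $S$ and of $J^{\pm}(S)$ coming from Lemma \ref{mfp}, and the basic facts about the Lorentz--Finsler distance in the globally hyperbolic case. Since $S$ is a Cauchy hypersurface, $(M,C)$ is globally hyperbolic; by \cite{minguzzi17} the causal relation is then closed and, because $\mathscr F(\p C)=0$, the Lorentz--Finsler distance $d$ is finite and continuous on $M\times M$ (with the convention $d(q,p)=0$ off $J^+$). This is the cone-structure analogue of the classical fact that the Lorentzian time separation is continuous on globally hyperbolic spacetimes, and it is the one place where the hypothesis $\mathscr F(\p C)=0$ genuinely enters.

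For finiteness I would note that for $p\in J^+(S)$ every causal curve from $S$ to $p$ issues from a point of $S\cap J^-(p)$, and $S\cap J^-(p)=S\cap\big(J^-(p)\cap J^+(S)\big)$ is compact and non-empty by the previous proposition (applied with $K=\{p\}$) together with the closedness of $S$. Hence $d^+_S(p)=d(S,p)=\sup_{q\in S\cap J^-(p)}d(q,p)$ is attained, so it is finite; moreover $d^+_S\ge 0$, and $d^+_S$ vanishes on $S$ by acausality of $S$.

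For lower semicontinuity at $p_0\in J^+(S)$: given $0<\epsilon<d^+_S(p_0)$ (the case $\epsilon\ge d^+_S(p_0)$ being trivial from $d^+_S\ge 0$), choose $q_0\in S\cap J^-(p_0)$ with $d(q_0,p_0)>d^+_S(p_0)-\epsilon>0$, so that $q_0\ll p_0$ (a causal curve of positive length deforms to a chronological one, using $\mathscr F(\p C)=0$). Since the chronological relation is open there is a neighbourhood $W\ni p_0$ with $q_0\ll p$, hence $p\in I^+(S)\subset J^+(S)$ and $d^+_S(p)\ge d(q_0,p)$, for all $p\in W$; continuity of $d$ then gives $d^+_S(p)\ge d(q_0,p)\to d(q_0,p_0)>d^+_S(p_0)-\epsilon$. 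For upper semicontinuity: if $p_n\to p_0$ in $J^+(S)$, write $d^+_S(p_n)=d(q_n,p_n)$ with $q_n\in S\cap J^-(p_n)$; the $q_n$ lie in the compact set $J^-\big(\{p_0\}\cup\{p_n\}_n\big)\cap J^+(S)$, so a subsequence converges to some $q_\infty\in S$ with $q_\infty\le p_0$ (causal relation closed), whence $\limsup d^+_S(p_n)=\lim d(q_n,p_n)=d(q_\infty,p_0)\le d^+_S(p_0)$. Thus $d^+_S$ is continuous on $J^+(S)$.

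The statement for $d^-_S$ follows by time-duality (applying the above to the reversed cone structure). Finally $d_S$ equals $d^+_S$ on the closed set $J^+(S)$ and $d^-_S$ on the closed set $J^-(S)$, these two continuous functions agreeing (both vanishing) on $J^+(S)\cap J^-(S)=S$, while $J^+(S)\cup J^-(S)=M$; so the pasting lemma yields a finite, continuous $d_S$ on $M$. The only real obstacle is the cited input on finiteness and continuity of $d$ in the globally hyperbolic setting; everything after that is a routine compactness-plus-semicontinuity argument, the one slightly delicate point being the behaviour of $d^+_S$ near $S$, which is taken care of by the trivial lower bound $d^+_S\ge 0$.
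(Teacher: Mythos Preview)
Your argument is correct. The approach differs from the paper's mainly in packaging: the paper argues directly from the upper semi-continuity of the length functional and a limit curve argument to get finiteness and upper semi-continuity of $d^+_S$, and then invokes the locally Lipschitz hypothesis (as in \cite[Thm.~2.53]{minguzzi17}) for lower semi-continuity. You instead import the full continuity of $d$ on $M\times M$ in the globally hyperbolic case (which already encodes both ingredients) and reduce everything to elementary compactness and pointwise supremum arguments, together with the pasting lemma for $d_S$. Your route is longer but makes the logical dependencies more visible, and it isolates cleanly where $\mathscr F(\p C)=0$ and the locally Lipschitz hypothesis are used; the paper's route is terser and avoids citing the global continuity of $d$. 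Two minor points: in your upper semi-continuity step you should first pass to a subsequence realising the $\limsup$ before extracting a convergent $q_n$, and your lower semi-continuity can be shortened by observing that $d^+_S(p)=\sup_{q\in S}d(q,p)$ is a supremum of continuous functions (with the convention $d=0$ off $J$), hence automatically lower semi-continuous.
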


The function $d_S$ is called the {\em  signed (Lorentz-Finsler) distance function}.

\begin{proof}
By the upper semi-continuity of the length functional \cite[Thm.\ 2.54]{minguzzi17} through a standard limit curve argument one gets that the distance is realized, namely there is a continuous causal curve connecting $S$ to $p$ whose Lorentz-Finsler length coincides the distance $d(S,p)$. In particular $d^+_S$ is finite. The same argument shows that $d^+_S$ is upper semi-continuous. For the lower semi-continuity one has to use the locally Lipschitz property of the cone distribution. The proof is analogous to that of \cite[Thm.\ 2.53]{minguzzi17}.
\end{proof}

The next result proves that the signed distance is a {\em rushing time} in the sense of \cite{minguzzi18b}.
This anti-Lipschitz property will be of fundamental importance for the application of our smoothing theorem.

\begin{proposition}
Let $(M,\mathscr{F})$ be a locally Lipschitz proper Lorentz-Finsler space such that $\mathscr{F}(\p C)=0$, and let $S$ be a locally stably acausal Cauchy hypersurface. The continuous and finite function $d_S\colon M\to \mathbb{R}$ satisfies: for every $(p,q)\in J$,
\[
d_S(q)-d_S(p)\ge d(p,q).
\]
\end{proposition}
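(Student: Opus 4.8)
The plan is to prove the anti-Lipschitz (rushing time) inequality $d_S(q)-d_S(p)\ge d(p,q)$ for every causally related pair $(p,q)\in J$ by splitting according to the position of $p$ and $q$ relative to $S$, using Lemma \ref{mfp}, which tells us that $J^+(S)$ and $J^-(S)$ are closed, cover $M$, and meet exactly along $S$. Since $q\in J^+(p)$, once we know $p\in J^+(S)$ we automatically get $q\in J^+(S)$ (because $J^+(J^+(S))\subset J^+(S)$), so the only possibilities are: (i) $p,q\in J^+(S)$; (ii) $p,q\in J^-(S)$; and (iii) $p\in J^-(S)\setminus S$, $q\in J^+(S)$. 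The first step is therefore to reduce to these three cases and handle each by the triangle inequality for the Lorentz-Finsler distance $d$, recalling from the previous proposition that the relevant distances are finite and that maximizing continuous causal curves exist.

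In case (i), by definition $d_S(p)=d(S,p)$ and $d_S(q)=d(S,q)$. Pick a maximizing causal curve from $S$ to $p$ realizing $d(S,p)$, and concatenate it with a maximizing causal curve from $p$ to $q$; this is an admissible causal curve from $S$ to $q$, so by the reverse triangle inequality for $d$ (superadditivity of the length functional along concatenations, \cite[Thm.\ 2.54]{minguzzi17}) we get $d(S,q)\ge d(S,p)+d(p,q)$, i.e.\ $d_S(q)-d_S(p)\ge d(p,q)$. Case (ii) is dual: here $d_S(p)=-d(p,S)$, $d_S(q)=-d(q,S)$, and one concatenates a maximizing curve from $q$ to $S$ after a maximizing curve from $p$ to $q$, obtaining $d(p,S)\ge d(p,q)+d(q,S)$, which rearranges to the same inequality. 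Case (iii) is the genuinely mixed one: $d_S(p)=-d(p,S)\le 0\le d(S,q)=d_S(q)$. Here I would take a maximizing causal curve from $p$ to $q$ and argue that it must cross $S$ at some point $r\in S$ (this uses that $p\in J^-(S)\setminus S = M\setminus J^+(S)$ while $q\in J^+(S)$, together with the connectedness argument already used in Lemma \ref{mfo}: the curve starts outside $J^+(S)$ and ends inside it, so it meets $S$). Splitting the maximizing curve at $r$ and using superadditivity of $d$ gives $d(p,q)\le d(p,r)+d(r,q)\le d(p,S)+d(S,q)$, whence $d(S,q)-(-d(p,S))=d(S,q)+d(p,S)\ge d(p,q)$, which is exactly $d_S(q)-d_S(p)\ge d(p,q)$.

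The main obstacle is the mixed case (iii), and within it the only subtle point is justifying that a maximizing causal curve from $p$ to $q$ actually meets $S$: one must rule out the possibility that it "jumps across" $S$ without touching it. This is handled by the same topological input as Lemma \ref{mfo} — $M\setminus S$ is the disjoint union of the open sets $J^+(S)\setminus S$ and $J^-(S)\setminus S$ by Lemma \ref{mfp}, and a connected curve joining the two pieces must pass through $S$ — together with \cite[Thm.\ 2.33]{minguzzi17} to see that the curve genuinely crosses rather than merely touches, which is all that is needed for the length estimate. A minor point to be careful about throughout is the sign bookkeeping in the definitions of $d_S^\pm$ and the fact that $\mathscr{F}(\p C)=0$ guarantees $d$ does not blow up on the boundary cones, so the concatenated curves have well-defined finite length; these are routine given the earlier propositions in this subsection.
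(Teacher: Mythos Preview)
Your proof is correct and follows essentially the same three-case split and argument as the paper. The only differences are cosmetic: in cases (i) and (ii) you invoke existence of maximizers and concatenation where the paper uses the reverse triangle inequality directly via a supremum over $r\in S$, and in case (iii) you spell out the connectedness argument for why the maximizing curve meets $S$, which the paper leaves implicit. One small terminological slip: in case (iii) you write ``superadditivity of $d$'' to justify $d(p,q)\le d(p,r)+d(r,q)$, but superadditivity (the reverse triangle inequality) goes the other way; the inequality you actually need and use comes simply from splitting the maximizing curve at $r$ and bounding each piece's length by the corresponding distance.
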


\begin{proof}
If $p$ and $q$ belong to $J^+(S)$, then $d_S=d_S^+=d(S,\cdot)$, and we have for every $r\in J^{-}(p)\cap S$, $d(S,q)\ge d(r,q)\ge  d(r,p)+d(p,q)$, which taking the supremum over $r$ gives,
$d(S,q)\ge d(S,p)+d(p,q)$ which is what we wanted to prove.
If $p$ and $q$ belong to $J^-(S)$, then $d_S=d_S^-=-d(\cdot, S)$, and we have for every $r\in J^+(q)\cap S$,  $d(p,S)\ge d(p,r)\ge d(p,q)+d(q,r)$, which taking the supremum over $r$ gives $d(p,S)\ge  d(p,q)+d(q,S)$, which written in the form $[-d(q,S)]-[-d(p,S)]\ge d(p,q)$ gives what we wanted to prove. For $p\in J^-(S)\backslash S$ and $q\in J^+(S)\backslash S$ consider the maximizing continuous causal curve $\sigma$ connecting $p$ to $q$ and let $r$ be its intersection point with $S$. Then $d(p,q)=\ell(\sigma)=d(p,r)+(r,q)$. Now $d(S,q)\ge d(r, q)$ and $d(p,S)\ge d(p,r)$ thus $d(S,q)+d(p,S)\ge d(p,r)+d(r,q)=d(p,q)$, which can be rewritten as $d_S(q)-d_S(p)\ge d(p,q)$.
\end{proof}

Let $C$ be a smooth distribution of proper sharp strongly convex closed cones.
Let us consider a positive homogeneous function ${\mathscr{F}}\colon C\to [0,+\infty)$
 such that ${\mathscr{F}}^{-1}(0)=\p C$, ${\mathscr{L}}:=-{\mathscr{F}}^2/2$ is such that the vertical Hessian on $ C$, $\dd^2_y \mathscr{L}$, is smooth and Lorentzian, and $\dd {\mathscr{L}}\ne \emptyset$ on $\p C$. This function exists by \cite[Prop.\ 13]{minguzzi15e}, see also  \cite[Cor.\ 5.8]{javaloyes18}. With these properties
  $(M,{\mathscr{F}})$ is certainly a  locally Lipschitz proper Lorentz-Finsler space \cite[Thm.\ 2.52]{minguzzi17} so the previous results hold true.

We want to know how the regularity $C^k$ of a spacelike Cauchy hypersurface $S$ reflects itself on the regularity of the signed distance $d_S$. For this reason we are considering a quite regular Lorentz-Finsler structure, in this way all the non-differentiability aspects of $d_S$ will be due to the hypersurface. Notice that under this regularity the timelike geodesics make sense and they coincide with the locally maximizing continuous causal curve with non-vanishing length \cite[Thm.\ 6]{minguzzi13d} (the formalism of \cite{minguzzi13d} can be applied because the Finsler Lagrangian can be extended to all of $TM\backslash 0$, see \cite{minguzzi14h}). Thus the maximal curves that realize the distance $d^+_S$ or $-d^-_S$ are timelike geodesics normal to $S$ (all our continuous causal curves are future-directed).

Here normality has to be understood as follows. A vector $v$ is normal to $S$ if it can be written in the form $v=r n$, where $r\in \mathbb{R}$ and $n$ is a (future-directed) normalized timelike vector $n\in \textrm{Int} C$ (necessarily unique) such that $\mathscr{F}(n)=1$, $TS=\textrm{ker} g_n(n,\cdot)$.

The normal bundle $N\subset TM$ can be identified with the image of a map $S\times \mathbb{R}\to TM$, $(p,r)\mapsto  n_p r$, where $p\mapsto n_p$ is the normalized section. Since this section is $C^{k-1}$, $N$ is a $C^{k-1}$ submanifold of $TM$. Stated in another way $S\times \mathbb{R}$ with $N$ are $C^{k-1}$ diffeomorphic.
%The immersion of $N$ in $TM$ is $C^{k-1}$ because the normalized section is $C^{k-1}$.
For $v\in N$, we define $\exp_S v=\gamma_n(r)$, where $\gamma_v$ is the timelike geodesic with initial condition $v$.
%(basically all the notions are going to depend on just the original future cone). First extend $\mathscr{F}$ to $-C$ putting for $v\in -C$, $\mathscr{F}(v)=\mathscr{F}(-v)$, (so past-directed timelike geodesics make sense and are reparametrizations of future-directed timelike geodesics) then the normal bundle $N$ to $S$ is given by the vectors $n\in C\cup -C\cup\{0\}$, such that either $n=0$ or $\textrm{ker}g _{n}(n,\cdot)=TS$. At any given point $p\in S$ we have  that $N_p$ is a one-dimensional vector space, while $N$ is a $C^{k-1}$ vector bundle over $S$. The exponential map $\exp_S\colon N \to M$ is now defined in the usual way.
For $v\in \textrm{Int} C$ let $\gamma_v$ be the timelike geodesic with initial condition $v$.
Remember that the geodesic flow  $\textrm{Int} C\times \mathbb{R} \to \textrm{Int} C$, $(v,t)\mapsto \gamma_v'(t)$, is smooth because the spray derived from the Finsler Lagrangian is smooth. Thus the map $S\times \mathbb{R} \to TM$, $(p,t)\mapsto \gamma_{n_p}'(t)$ is $C^{k-1}$ as it is the composition of a $C^{k-1}$ map and of a smooth map.

Notice that $\gamma_n(r)=\pi\circ\gamma_n'(r)$, where $\pi\colon TM\to M$. It is well known that the map $(n,r)\mapsto \gamma_n(r)$ is $C^{k-1}$ and locally invertible (with the identification of $S\times\mathbb{R}$ with $N$ this map is basically $\exp_S$), in the sense that there is an open neighborhood $V$ of the zero section of $S\times \mathbb{R}$ and an open neighborhood $U$ of $S$ that are $C^{k-1}$ diffeomorphic according to the map.
%We define $\exp_S v=\gamma_n(r)$ where $(r,n)$ are chosen so that $v=nr$. Thus $\exp_S$ is $C^{k-1}$. There is an open neighborhood $U$ of $S$ such that $\exp_S$ establishes a diffeomorphism between a neighborhood $\exp^{-1} U$ of the zero section of $N$ and $U$.
The regularity of the map that assigns to $q\in U$, $q=\pi(\gamma_n'(r))$ the vector $w_q=\gamma_n'(r)\in T_qM$ is $C^{k-1}$ because one has first to apply the $C^{k-1}$ inverse of the map $(n,r)\mapsto \gamma_n(r)$ to find $(n,r)$ and then to apply the $C^{k-1}$ map $(n,r)\mapsto \gamma_n'(r)$ to find $w_q$.

%$\exp^{-1}_S$ to reach $N$, used the $C^{k-1}$ diffeomorphism to find $(n,r)\in S\time \mathbb{R}$ and then go forward again with $C^{k-1}$ immersion and the geodesic flow to find $\gamma_n'(r)$, the first two maps being $C^{k-1}$ and the last being smooth.

The next theorem generalizes a well known result on the regularity of the signed distance function in Riemannian spaces \cite{krantz81,matsumoto92}.

\begin{theorem} \label{nja}
Let $(M,\mathscr{F})$ be a sufficiently regular proper Lorentz-Finsler space as described above. If $S$ is a $C^{1,1}$ or a $C^k$, $2\le k\le \infty$, spacelike Cauchy hypersuface, then $d_S$ has the same regularity  in a neighborhood $U$ of $S$.
\end{theorem}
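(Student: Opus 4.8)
The plan is to show that $d_S$ inherits the regularity of $S$ by exhibiting it locally as a composition of maps whose regularity we have already tracked in the discussion preceding the theorem. The key observation is that, on a neighborhood $U$ of $S$ small enough that the normal exponential map $\exp_S\colon V\subset S\times\mathbb{R}\to U$ is a $C^{k-1}$ diffeomorphism (or $C^{0,1}$ on the $S\times\mathbb R$ factor when $S$ is $C^{1,1}$), the maximizing curve from $S$ to a point $q\in U\cap J^+(S)$ is the unique $\mathscr F$-timelike geodesic normal to $S$ through $q$. Hence $d_S(q)$ equals the $\mathscr F$-arclength of that geodesic segment, which by positive homogeneity of $\mathscr F$ and the normalization $\mathscr F(n_p)=1$ is simply the parameter value $r$ for which $\exp_S(p,r)=q$. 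In other words, $d_S = r\circ(\exp_S)^{-1}$ on $U\cap J^+(S)$, and symmetrically on $U\cap J^-(S)$; on $S$ itself the two pieces agree and vanish. Therefore it suffices to argue that (i) the candidate $\tilde d := r\circ(\exp_S)^{-1}$, which is manifestly $C^{k-1}$ on $U$, actually coincides with $d_S$ there, and (ii) that $d_S$ is in fact $C^k$ (not merely $C^{k-1}$), i.e.\ one gains a derivative over the naive count.

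First I would fix $p_0\in S$ and shrink $U$ using the local invertibility of $(p,r)\mapsto\gamma_{n_p}(r)$ recorded above, so that every $q\in U$ lies on a unique normal geodesic segment meeting $S$ in $U$, with $|r|$ as small as we like. Then I would invoke the preceding propositions: $d_S$ is finite and continuous, the distance is realized by a maximizing continuous causal curve, and under the present smooth Lorentz--Finsler regularity such a maximizer with nonzero length is a timelike geodesic; a first-variation / transversality argument (orthogonality of the maximizer to $S$ at its $S$-endpoint, using $TS=\ker g_{n}(n,\cdot)$) forces it to be a normal geodesic. By taking $U$ small one excludes competing maximizers that would leave $U$, so the maximizer from $S$ to $q$ is exactly the short normal segment, giving $d_S(q)=\tilde d(q)$ on $J^+(S)\cap U$; the argument on $J^-(S)\cap U$ is identical with time-reversed geodesics, and continuity across $S$ matches the two determinations. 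This establishes that $d_S$ is at least $C^{k-1}$ on $U$, and in the $C^{1,1}$ case that $d_S$ is $C^{1,1}$ (here $k-1$ is to be read as ``$C^{0,1}$ gain'', i.e.\ the normal field is Lipschitz and the composition with the smooth geodesic flow and a smooth projection is $C^{1,1}$ in $q$).

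The main obstacle is the bootstrap from $C^{k-1}$ to $C^k$ regularity. The naive composition count loses one derivative because the unit normal field $p\mapsto n_p$ is only $C^{k-1}$ when $S$ is $C^k$. To recover the lost derivative I would instead use the characterization of $d_S$ as a solution of the eikonal-type equation $g_{\nabla\hspace{-1pt}d_S}(\nabla d_S,\nabla d_S)=-1$ (equivalently $\mathscr F^*(\dd d_S)=1$ with the Legendre/Fenchel dual $\mathscr F^*$), with the $C^{k-1}$-but-known level set $S=\{d_S=0\}$ as ``initial data''. Concretely: $\dd d_S$ at $q$ is $g_{w_q}(w_q,\cdot)$ up to normalization, where $q\mapsto w_q=\gamma_{n}'(r)$ is the $C^{k-1}$ velocity field constructed above; so $\dd d_S\in C^{k-1}$ and hence $d_S\in C^k$ by integration — the point being that differentiating $d_S$ once already produces a $C^{k-1}$ object, so $d_S$ itself is $C^k$. (This is exactly the mechanism behind the cited Riemannian results of Krantz--Parks and Matsumoto--Ohta.) For $C^{1,1}$ hypersurfaces the same reasoning gives $\dd d_S$ Lipschitz, i.e.\ $d_S\in C^{1,1}$. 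I would then remark that the neighborhood $U$ can be taken global along $S$ by a standard patching/partition argument, since the local normal tubular neighborhoods of compact-exhaustion pieces of $S$ can be glued, using that the two signed determinations agree on overlaps and on $S$.
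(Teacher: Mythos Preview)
Your proposal is correct and follows essentially the same approach as the paper. The paper's proof is terse, invoking the formula $\dd(d_S)(q)=g_{w_q}(w_q,\cdot)=\dd\mathscr{L}(w_q)$ together with the $C^{k-1}$ regularity of $q\mapsto w_q$ established in the preceding discussion to conclude $d_S\in C^k$; your write-up simply unpacks this more explicitly, first identifying $d_S$ with $r\circ(\exp_S)^{-1}$ to get $C^{k-1}$, and then applying the same differential formula to gain the extra derivative (the ``up to normalization'' qualifier is unnecessary since $\mathscr{F}(w_q)=1$).
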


Weaker regularity options for the hypersurface could be considered, but one would need to introduce the notion of spacelike hypersurface of positive reach.

Due to Thm.\ \cite[Thm.\ 2.43]{minguzzi17} the Cauchy property can be dropped.

\begin{proof}
%The exponential map $\exp_S$ establishes a diffeomorphism between a neighborhood $U$ of $S$ and a neighborhood $\exp^{-1}(U)$ of the zero section of $N$.
This fact follows from the previous discussion and from  the next formula  (analogous to Eq.\ (13) in \cite{minguzzi13d}, see also Remark 7 in the same reference). At $q\in U$
\[
\dd (d_S)(q)= g_{w_q}(w_q, \cdot)=\dd \mathscr{L}(w_q)(\cdot).
\]
Since $q\mapsto w_q$ is $C^{k-1}$ and non-zero, and since $g$ (or $\dd \mathscr{L}$) is smooth over the slit tangent bundle we have that $d_S$ is $C^k$ (in the $C^{1,1}$ case the argument is similar but $\exp_S$ really establishes a lipeomorphism not a diffeomorphism  \cite[Thm.\ 13]{minguzzi13d} \cite{kunzinger13}).
\end{proof}

\subsection{Smoothing signed distance functions}

We recall our smoothing theorem \cite[Thm.\ 3.9]{minguzzi17} \cite{chrusciel13} which we slightly improve by adding the penultimate statement.

\begin{theorem} \label{moz}
  Let $({ M},C)$ be a closed cone structure and
 let $\tau\colon M\to \mathbb{R}$ be a continuous function. Suppose that there is a $C^0$ proper cone structure $\hat C>C$ and continuous functions  homogeneous of degree one in the fiber $\underline F, \overline F\colon \hat C\to \mathbb{R}$ such that for every  $\hat C$-timelike  curve $x\colon [0,1]\to M$
 \begin{equation}
 \int_x \underline F(\dot x) \dd t\le \tau(x(1))-\tau(x(0))\le \int_x \overline F(\dot x) \dd t.
 \end{equation}
 Let $h$ be an arbitrary Riemannian metric, then for every continuous function $\alpha\colon { M} \to (0,+\infty)$ there exists
a smooth  function $\hat{\tau}$ such that $\vert \hat\tau-\tau\vert <\alpha$ and for every $v\in C$
\begin{equation} \label{kid}
\underline F(v)- \Vert v\Vert_h \le \dd \hat \tau(v) \le \overline F(v)+ \Vert v\Vert_h .
\end{equation}
Let $A$ be a closed set and $G\supset A$ an open neighborhood. If $\tau$ is already $C^1$ and satisfying (\ref{kid}) on $G$ then $\hat \tau$ can be found  such that it is $C^1$ and (\ref{kid}) holds everywhere, $\vert \hat\tau-\tau\vert <\alpha$ on $M$,  $\hat \tau$  coincides with $\tau$ on $A$, $\hat \tau$ has the regularity of $\tau$ on $G$, and is smooth outside $G$.

Similar versions, in which some of the functions $\underline F, \overline F$ do not exist hold true. One has just to drop the corresponding inequalities in (\ref{kid}).
\end{theorem}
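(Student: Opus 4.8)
The first two paragraphs of the statement are precisely \cite[Thm.\ 3.9]{minguzzi17} (cf.\ \cite{chrusciel13}), so the plan is to take that result for granted and to establish only the new, penultimate assertion; the closing remark on one-sided estimates will then be immediate because the argument never uses the lower and upper bounds simultaneously. Rather than re-running the smoothing construction, I would start from the smooth function it already provides and \emph{correct it near $A$} by a cut-off, so that it agrees with $\tau$ there while remaining close enough to $\tau$ that the differential inequality is preserved.

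In detail: first fix open sets $A\subset V_1\subset\overline{V_1}\subset V_2\subset\overline{V_2}\subset G$ (possible since the manifold $M$ is normal) and a smooth $\phi\colon M\to[0,1]$ with $\phi\equiv1$ on $\overline{V_1}$ and $\operatorname{supp}\phi\subset V_2$, and set $c(p):=1+\Vert\dd\phi(p)\Vert^*_h$, a positive continuous function. Next apply the already-established part of the theorem, but with the rescaled metric $h/4$ in place of $h$ (so that the slack actually attained is $\tfrac12\Vert v\Vert_h$) and with error function $\alpha':=\min\{\alpha,\,(4c)^{-1}\}$, obtaining a smooth $\hat\tau_0$ with $|\hat\tau_0-\tau|<\alpha'$ on $M$ and $\underline F(v)-\tfrac12\Vert v\Vert_h\le\dd\hat\tau_0(v)\le\overline F(v)+\tfrac12\Vert v\Vert_h$ for $v\in C$. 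Finally put
\[
\hat\tau:=(1-\phi)\,\hat\tau_0+\phi\,\tau .
\]

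Now the bookkeeping. On $\overline{V_1}\supset A$ one has $\phi\equiv1$, hence $\hat\tau=\tau$; outside $\operatorname{supp}\phi$, and in particular on $M\setminus G$, one has $\phi\equiv0$, hence $\hat\tau=\hat\tau_0$ is smooth; on $G$ the function $\hat\tau$ is built from $\tau$ and the smooth functions $\phi,\hat\tau_0$, so it carries (at least) the regularity of $\tau$ there, and since $G$ together with $M\setminus\operatorname{supp}\phi$ covers $M$ and $\tau$ is $C^1$ on $G$, $\hat\tau$ is globally $C^1$; also $|\hat\tau-\tau|=(1-\phi)|\hat\tau_0-\tau|<\alpha'\le\alpha$. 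For (\ref{kid}): on $\overline{V_1}$ it holds because $\dd\hat\tau=\dd\tau$ and $\tau$ satisfies it on $G$, and outside $\operatorname{supp}\phi$ it holds because $\dd\hat\tau=\dd\hat\tau_0$. The only genuine work — and the step I expect to be the main obstacle — is in the transition shell $V_2\setminus\overline{V_1}$, where
\[
\dd\hat\tau(v)=(1-\phi)\,\dd\hat\tau_0(v)+\phi\,\dd\tau(v)+(\tau-\hat\tau_0)(p)\,\dd\phi(v).
\]
The key remark is that on $G$ the function $\tau$ is $C^1$, so the \emph{standing} hypothesis of the theorem, read along the short $\hat C$-timelike segments $t\mapsto p+tv$ with $v\in C_p\subset\operatorname{Int}\hat C_p$, delivers the \emph{sharp} bound $\underline F(v)\le\dd\tau(v)\le\overline F(v)$ there, with no $\Vert v\Vert_h$ slack; this is exactly the margin needed. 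Consequently the convex combination $(1-\phi)\dd\hat\tau_0(v)+\phi\,\dd\tau(v)$ lies in $[\underline F(v)-\tfrac12\Vert v\Vert_h,\ \overline F(v)+\tfrac12\Vert v\Vert_h]$, while $|(\tau-\hat\tau_0)(p)\,\dd\phi(v)|\le\alpha'(p)\,c(p)\,\Vert v\Vert_h\le\tfrac14\Vert v\Vert_h$ by the choice of $\alpha'$, so altogether $\underline F(v)-\Vert v\Vert_h\le\dd\hat\tau(v)\le\overline F(v)+\Vert v\Vert_h$, as required. The one-sided variants follow by simply discarding whichever of the two estimates lacks its $\underline F$ or $\overline F$, since the argument above is symmetric in the two inequalities and never couples them.
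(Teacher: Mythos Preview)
Your argument is correct and takes a genuinely different route from the paper. The paper does not treat \cite[Thm.\ 3.9]{minguzzi17} as a black box; instead it reopens the partition-of-unity construction, shrinks the local radii $\epsilon(p)$ so that balls centered outside $A$ miss $A$ and balls centered in $G$ stay in $G$, and simply sets the local piece $\tau_i$ equal to $\tau$ (rather than to its mollification) whenever the center $p_i$ lies in $G$; the original calculation verifying (\ref{kid}) then goes through unchanged. Your approach is more modular: you invoke the existing theorem once, with the rescaled metric $h/4$ to gain the tighter slack $\tfrac12\Vert v\Vert_h$, and then glue the result to $\tau$ via a single cutoff, the key observation being that on $G$ the global integral hypothesis differentiates to the \emph{sharp} bound $\underline F(v)\le \dd\tau(v)\le\overline F(v)$, leaving enough room to absorb the $\dd\phi$ error. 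The paper's route avoids that cutoff-derivative bookkeeping but at the price of reentering the proof; yours is self-contained once the base theorem is granted, and incidentally shows that the hypothesis ``$\tau$ satisfies (\ref{kid}) on $G$'' is already implied by ``$\tau$ is $C^1$ on $G$'' together with the standing integral inequality.
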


\begin{proof}
We added just the penultimate sentence with respect to the previous version. In order to prove it we modify the original proof by taking, for $p\notin A$, $\epsilon(p)$ so small that  $B_p(3\epsilon(p))\cap A=\emptyset$, while for $p\in G$, $\epsilon(p)$ so small that $B_p(3\epsilon(p))\subset G$. Recall that in the original proof by using $\sigma$-compactness we pass to a locally finite covering of $M$ given by the open sets $\mathscr{O}_i=B_{p_i}(\epsilon_i)$, $\epsilon_i=\epsilon(p_i)$, we introduce a partition of unity $\varphi_i$ subordinate to the covering $\{\mathscr{O}_i\}$,   in $\mathscr{O}_i$ we  smooth  $\tau$ getting a function  $\tau_i$, and define $\hat \tau=\sum_i\varphi_i\tau_i$. Instead of always constructing $\tau_i$ through  convolution with a smooth kernel $\chi$, it is sufficient to set $\tau_i=\tau$ whenever $p_i\in G$, while define $\tau_i$ as in the original proof for $p_i\notin G$. Notice that each $\tau_i$ is at least as regular as $\tau$, so the same can be said of $\hat \tau$. Given the support of $\varphi_i$, $\hat \tau$ is  smooth outside $G$ and coincides with $\tau$ on $A$. The calculation on p.\ 110 of the original proof easily passes through with this modification, so Eq.\ (\ref{kid}) does indeed hold everywhere.
\end{proof}

The following results hold also for closed cone structures $(M,C)$ just set $\mathscr{F}=0$.
%It shows that the desired Cauchy temporal function can be obtained smoothing the signed distance $d'_S$ for a wider
They are obtained by smoothing the signed distance function $d'_S$ for a suitable $\mathscr{F}'>\mathscr{F}$.

The first result is well known \cite[Thm.\ 2.45,3.12]{minguzzi17} \cite{fathi12,bernard18}. Compared to the proof in \cite{minguzzi17} here we use the signed distance to produce the anti-Lipschitz function to be smoothed, while there we made use of a product trick (which gave also information on stable spacetimes). We include this proof because it shows the effectiveness of the smoothing theorem coupled with the signed distance idea.

\begin{theorem} \label{jao}
Let $(M,\mathscr{F})$ be  a closed Lorentz-Finsler space. The following properties are equivalent
\begin{itemize}
\item[(a)] Global hyperbolicity.
\item[(b)] There is a Cauchy hypersurface.
\item[(c)] There is a Cauchy time function.
\end{itemize}
Moreover, in this case the Cauchy hypersuface can be found stable, and for any chosen complete Riemannian metric $h$ the Cauchy time function can be found smooth temporal $h$-steep and strictly $\mathscr{F}$-steep.
\end{theorem}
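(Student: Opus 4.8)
The plan is to prove the chain of implications (c) $\Rightarrow$ (b) $\Rightarrow$ (a) $\Rightarrow$ (c), and to keep track along the way of the extra regularity and steepness properties claimed in the last sentence. The implication (c) $\Rightarrow$ (b) is essentially immediate: given a Cauchy time function $t$, any of its level sets is an acausal topological hypersurface met by every inextendible continuous causal curve, hence a Cauchy hypersurface. The implication (b) $\Rightarrow$ (a) is quoted from the excerpt, namely \cite[Thm.\ 2.43]{minguzzi17}: the existence of a Cauchy hypersurface forces global hyperbolicity. So the substantive work is entirely in (a) $\Rightarrow$ (c), and in upgrading the function produced to a smooth temporal, $h$-steep, and strictly $\mathscr{F}$-steep one, with a \emph{stable} Cauchy hypersurface among its level sets.

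For (a) $\Rightarrow$ (c) I would argue as follows. Pick an auxiliary sufficiently regular proper Lorentz-Finsler structure $\mathscr{F}' > \mathscr{F}$ with $\mathscr{F}'(\p C')=0$ (widening the cones slightly is possible by \cite[Thm.\ 2.23]{minguzzi17}, and in the globally hyperbolic case global hyperbolicity is stable under such a small widening), chosen moreover $h$-bounded below so that $\mathscr{F}' \ge 2\|\cdot\|_h$ on the widened cone, say. Global hyperbolicity for $(M,\mathscr{F}')$ gives, by the classical Geroch-type argument adapted to cone structures, a Cauchy hypersurface $S$ for $(M,\mathscr{F}')$; since $\mathscr{F}' > \mathscr{F}$ this $S$ is a \emph{stable} Cauchy hypersurface for $(M,\mathscr{F})$, which already secures the penultimate claim. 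Now apply the signed-distance machinery of Subsection \ref{nog} to $(M,\mathscr{F}')$ and $S$: the function $d'_S := d_{S}$ built from $\mathscr{F}'$ is finite, continuous, and by the rushing-time proposition satisfies $d'_S(q) - d'_S(p) \ge d'(p,q)$ for all $(p,q)\in J$. This is exactly the anti-Lipschitz hypothesis needed to feed $\tau = d'_S$ into the smoothing Theorem \ref{moz}, with $\hat C = C'$, $\underline F = \mathscr{F}'$ and no upper bound $\overline F$ (using the "drop the corresponding inequality" variant). The theorem then yields a smooth $\hat\tau$ with $\dd\hat\tau(v) \ge \mathscr{F}'(v) - \|v\|_h$ for every $v\in C$; since $\mathscr{F}' \ge 2\|\cdot\|_h$ on $C$, this gives $\dd\hat\tau(v) \ge \|v\|_h > 0$, i.e.\ $\hat\tau$ is smooth, temporal, $h$-steep, and also $\dd\hat\tau(v) \ge \mathscr{F}'(v)/2 > \mathscr{F}(v)$, so strictly $\mathscr{F}$-steep. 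Finally one must check that $\hat\tau$ is a \emph{Cauchy} time function, i.e.\ that its level sets are Cauchy hypersurfaces for $(M,\mathscr{F})$; this follows because a temporal function whose gradient is bounded below (in the $h$-steep sense) along every causal curve, on a manifold with a complete $h$, must grow without bound along inextendible causal curves, so every such curve crosses every level set — this is the standard argument, also recorded in \cite{minguzzi17}.

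The main obstacle, and the step I would be most careful about, is verifying that $\hat\tau$ is genuinely \emph{Cauchy} rather than merely temporal and steep. Temporality plus $h$-steepness controls the \emph{increase} of $\hat\tau$ along future-directed causal curves, but one needs the level sets to be met by every inextendible causal curve in \emph{both} time directions, which requires that $\hat\tau \to \pm\infty$ along every inextendible causal curve. With $h$ complete and $\dd\hat\tau(\dot x) \ge \|\dot x\|_h$, an inextendible causal curve has infinite $h$-length in each direction (otherwise it would have an endpoint, contradicting inextendibility in a complete $h$-metric — here one uses that causal curves in a closed cone structure are, after reparametrization, $h$-Lipschitz and that inextendibility is equivalent to escaping every compact set), and hence $\hat\tau$ ranges over all of $\mathbb{R}$ along it; this is where completeness of $h$ is essential. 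A secondary point to handle cleanly is the initial widening $\mathscr{F}\leadsto\mathscr{F}'$: one must invoke that global hyperbolicity of closed Lorentz-Finsler spaces is an open/stable property (so that a Cauchy hypersurface for the widened structure exists), which is part of the background theory of \cite{minguzzi17,minguzzi19c}, and one must arrange the normalization $\mathscr{F}' \ge 2\|\cdot\|_h$ simultaneously with $\mathscr{F}' > \mathscr{F}$, which is routine since one is free to rescale and since $h$-boundedness of $\mathscr{F}'$ on compact sets is automatic, globalized by a partition of unity. Everything else — finiteness and continuity of $d'_S$, the rushing-time inequality, the existence of the sufficiently regular $\mathscr{F}'$ — has been established in the preceding propositions, so the proof is a short assembly of those ingredients plus the smoothing theorem.
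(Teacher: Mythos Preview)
Your proposal is correct and follows essentially the same route as the paper: invoke stability of global hyperbolicity to widen to a locally Lipschitz proper $C'>C$, obtain a Cauchy hypersurface $S$ for $(M,C')$ via the Geroch argument (so $S$ is stable for $(M,C)$), take the signed $\mathscr{F}'$-distance $d'_S$ as the anti-Lipschitz input to Theorem~\ref{moz}, and read off $h$-steepness, strict $\mathscr{F}$-steepness, and the Cauchy property from the resulting lower bound $\dd\hat\tau(v)\ge \mathscr{F}'(v)-\Vert v\Vert_h$ together with completeness of $h$. The only slip is that to conclude $\mathscr{F}'(v)/2>\mathscr{F}(v)$ you need $\mathscr{F}'>2\mathscr{F}$ on $C$, not just $\mathscr{F}'>\mathscr{F}$; the paper imposes this explicitly, and as you note it is a harmless rescaling.
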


\begin{proof}
$(b)\Rightarrow (a)$. This direction is proved in \cite[2.43]{minguzzi17} with a proof similar to the Lorentzian one.

$(c)\Rightarrow (b)$. Clear, just take a level set of the Cauchy time function.

$(a)\Rightarrow (b)$ and $(c)$. This is proved in  \cite[2.42]{minguzzi17}: global hyperbolicity is stable, as can be proved through a direct topological proof \cite[Thm.\ 2.39]{minguzzi17}, thus there is a locally Lipschitz proper cone structure $C'>C$ such that $(M,C')$ is globally hyperbolic. The Geroch volume function for $(M,C')$ provides a Cauchy time function for $(M,C)$ and its level sets are stable Cauchy hypersurfaces for $(M,C)$.

Let $\mathscr{F}'$ be chosen such that $\mathscr{F}'(\p C')=0$, and sufficiently large so that $\mathscr{F}'> 2 \mathscr{F}$ on $C$, and such that $\mathscr{F}'{}^{-1}(1)\cap C$,  is contained in the unit ball of $4h$. As a consequence, for every $v\in C$, $2 \Vert v \Vert_h \le  \mathscr{F}'(v)$.

Given a  Cauchy hypersurface $S$ for $(M,C')$, the function $t\colon M\to \mathbb{R}$, $t=d'_S$, satisfies over every $C'$-timelike curve $x\colon [0,1]\to M$
\begin{equation} \label{jpo}
\int \mathscr{F}'(\dot x) \dd t=\ell'(x) \le  t(x(1))-t(x(0)).
\end{equation}
By Theorem \ref{moz} there is a smooth function $\tau$, $\vert \tau-t\vert<1$, that satisfies  $\dd \tau(v)\ge \mathscr{F}'(v)-\Vert v \Vert_h$ for every $v\in C$. But $\mathscr{F}'(v)- \Vert v\Vert_h \ge \mathscr{F}'(v)/2 > \mathscr{F}(v)$ and $\mathscr{F}'(v)- \Vert v\Vert_h \ge \Vert v \Vert_h$. The Cauchy property follows from $h$-steepness and the fact that $h$ is complete.
\end{proof}

\begin{theorem} \label{ngy}
Let $(M,\mathscr{F})$ be a closed Lorentz-Finsler space and let $h$ be a complete Riemannian metric. Let $S$ be a $C^{1,1}$ or a $C^k$, $2\le k\le \infty$,  spacelike   Cauchy hypersurface. There is an open neighborhood $U\supset S$, such that for any chosen open neighborhood $V\subset U$, $V\supset S$, there is  an $h$-steep and strictly $\mathscr{F}$-steep (hence temporal)  Cauchy function $\tau$, $\tau^{-1}(0)=S$, which has the same regularity as $S$ in $V$ and is smooth elsewhere.
\end{theorem}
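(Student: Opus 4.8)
The plan is to produce $\tau$ by a single application of the improved smoothing Theorem \ref{moz} to the signed Lorentz--Finsler distance of $S$, after first widening the cones enough to make the ambient structure as regular as needed while keeping $S$ a spacelike Cauchy hypersurface. Concretely: since $S$ is $C^{1,1}$ (hence $C^1$) and spacelike for $C$ it is locally stably acausal, so by Theorem \ref{sta} it is a stable Cauchy hypersurface; since the condition $T_pS\cap C_p=\emptyset$, $p\in S$, survives a sufficiently small widening of the cones, I would pick, by interpolation (cf.\ \cite[Thm.\ 2.23]{minguzzi17}) with cones chosen narrow enough, a \emph{smooth} proper cone structure $C'>C$ for which $S$ is at once Cauchy and spacelike. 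Then, as in the proof of Theorem \ref{jao}, I would choose a positive homogeneous $\mathscr{F}'\colon C'\to[0,+\infty)$ with $\mathscr{F}'{}^{-1}(0)=\p C'$ and with $-\mathscr{F}'{}^2/2$ of smooth Lorentzian vertical Hessian on $C'$ and non-vanishing differential on $\p C'$ (it exists by \cite[Prop.\ 13]{minguzzi15e}), scaled so that $\mathscr{F}'>2\mathscr{F}$ on $C\setminus 0$ and $\mathscr{F}'{}^{-1}(1)\cap C$ lies in the unit ball of $4h$, whence $2\Vert v\Vert_h\le\mathscr{F}'(v)$ on $C$. Now $(M,\mathscr{F}')$ is a sufficiently regular proper Lorentz--Finsler space in the sense of Subsection \ref{nog}, with $S$ as a spacelike Cauchy hypersurface.

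Next I would work with the signed distance $d'_S$ of $S$ for $(M,\mathscr{F}')$: it is finite and continuous on $M$, vanishes exactly on $S$, is positive above $S$ and negative below it, and by the rushing-time proposition $d'_S(q)-d'_S(p)\ge d'(p,q)$ for $(p,q)$ in the $C'$-causal relation, so $\int_x\mathscr{F}'(\dot x)\,\dd t\le d'_S(x(1))-d'_S(x(0))$ along every $C'$-timelike curve $x$. By Theorem \ref{nja} there is an open neighborhood $U\supset S$ on which $d'_S$ has the same regularity as $S$; on $U$ I would differentiate the last inequality (equivalently, use the identity $\dd(d'_S)(q)=g'_{w_q}(w_q,\cdot)$ of the proof of Theorem \ref{nja}) to obtain $\dd(d'_S)(v)\ge\mathscr{F}'(v)\ge\mathscr{F}'(v)-\Vert v\Vert_h$ for $v\in C'$, so that $d'_S$ satisfies the one-sided form of (\ref{kid}) on $U$ with $\underline F=\mathscr{F}'$ and no $\overline F$.

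Given the target neighborhood $V$ with $S\subset V\subset U$, I would then invoke the improved Theorem \ref{moz} with $\hat C=C'$, $\underline F=\mathscr{F}'$, $\tau=d'_S$, closed set $A=S$ (closed by Lemma \ref{mfp}), and open set $G=V$: this yields a $C^1$ function $\tau$ of the regularity of $d'_S$ --- hence of $S$ --- on $V$, smooth off $V$, with $\tau=d'_S$ (so $\tau=0$) on $S$ and $\dd\tau(v)\ge\mathscr{F}'(v)-\Vert v\Vert_h$ for all $v\in C$. Since $2\Vert v\Vert_h\le\mathscr{F}'(v)$ and $2\mathscr{F}(v)<\mathscr{F}'(v)$ on $C\setminus 0$, this gives $\dd\tau(v)\ge\Vert v\Vert_h$ and $\dd\tau(v)>\mathscr{F}(v)$, so $\tau$ is $h$-steep, strictly $\mathscr{F}$-steep and temporal. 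Completeness of $h$ then makes $\tau$ a Cauchy function for $(M,C)$ (as in the last line of the proof of Theorem \ref{jao}), so $\tau^{-1}(0)$ is a Cauchy hypersurface containing $S$; and since $S$ is itself a Cauchy hypersurface and $\tau$ is strictly increasing along $C$-causal curves, the inextendible continuous causal curve through any $q\in\tau^{-1}(0)$ meets $S$ at some $r$ with $\tau(r)=0=\tau(q)$, forcing $r=q$. Hence $\tau^{-1}(0)=S$ and $\tau$ has all the asserted properties.

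Granted Theorems \ref{sta}, \ref{nja} and \ref{moz}, the remaining work is short. The step that demands genuine care is the widening: the smooth cone structure $C'$ must keep $S$ \emph{spacelike}, not merely acausal Cauchy, otherwise Theorem \ref{nja} does not apply; the only other non-automatic point is the closing comparison argument that upgrades $S\subseteq\tau^{-1}(0)$ to equality.
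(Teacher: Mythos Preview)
Your proposal is correct and follows essentially the same route as the paper: widen to a smooth $C'>C$ keeping $S$ spacelike Cauchy, form the $C'$-signed distance $d'_S$, invoke Theorem \ref{nja} for its regularity on a neighborhood $U$ of $S$, and apply the improved smoothing Theorem \ref{moz} with $\underline F=\mathscr{F}'$ to obtain the $h$-steep, strictly $\mathscr{F}$-steep function $\tau$. You are right to flag that $C'$ must be chosen narrow enough for $S$ to remain \emph{spacelike} (not merely stable Cauchy), since Theorem \ref{nja} needs this; the paper leaves this implicit.

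The only substantive difference is in how $\tau^{-1}(0)=S$ is obtained. The paper takes $A$ to be a closed \emph{neighborhood} of $S$ and chooses the tolerance $\alpha$ with $\alpha\vert_{M\setminus A}<\vert d'_S\vert_{M\setminus A}\vert$, so that $\tau$ inherits the sign of $d'_S$ outside $A$ and hence vanishes only on $S$. You instead take $A=S$ itself and argue by a Cauchy-uniqueness trick: any $q\in\tau^{-1}(0)$ lies on an inextendible causal curve meeting $S$ at some $r$ with $\tau(r)=0=\tau(q)$, and temporality forces $r=q$. Both arguments are valid; yours avoids the bookkeeping with $\alpha$, while the paper's sign-control is more elementary and does not appeal again to the Cauchy property of $S$.
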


\begin{proof}
The hypersurface $S$ is locally stably acausal and hence a stable Cauchy hypersurface.
Let $C'$ be a smooth strongly convex proper cone structure  such that $C'>C$ is globally hyperbolic and $S$ is a  stable Cauchy hypersurface for  $C'$. As mentioned in the previous section we can find a  smooth positive homogeneous function $\tilde{\mathscr{F}}\colon C'\to [0,+\infty)$
 such that $\tilde{\mathscr{F}}(\p C')=0$, $\tilde{\mathscr{L}}:=-\tilde{\mathscr{F}}^2/2$  has Lorentzian  vertical Hessian on $\textrm{Int} C'$, and $\dd \tilde{\mathscr{L}}\ne \emptyset$ on $\p C'$, so that $(M,\tilde{\mathscr{F}})$ is a locally Lipschitz proper Lorentz-Finsler space. Let $\mathscr{F}'=\varphi\tilde{\mathscr{F}}$ and choose the factor $\varphi\colon M\to (0,+\infty)$ so that $\mathscr{F}'$ is so large that $\mathscr{F}'> 2 \mathscr{F}$ on $C$, and  such that $\mathscr{F}'{}^{-1}(1)\cap C$ is contained in the unit ball of $4h$.
%Moreover let $\mathscr{F}'$ be chosen such that $\mathscr{F}'(\p C')=0$, $-\mathscr{F}'^2/2$ has Lorentzian  vertical Hessian on $\textrm{Int} C$,   $\mathscr{F}'$ is so large that $\mathscr{F}'> 2 \mathscr{F}$ on $C$, and  such that $\mathscr{I}'\cap C$,  with $\mathscr{I}'=\mathscr{F}'{}^{-1}(1)$, is contained in the unit ball of $4h$ (a smooth function $\tilde{\mathscr{F}}$ such that $\mathscr{F}'(\p C')=0$, $-\mathscr{F}'^2/2$ with Lorentzian vertical Hessian on $\textrm{Int} C$ has been found, the other properties are accomplished setting $\mathscr{F}'=\varphi\tilde{\mathscr{F}}$ for a suitable factor $\varphi\colon M\to (0,+\infty$) ).
As a consequence, for every $v\in C$, $2 \Vert v \Vert_h \le  \mathscr{F}'(v)$. Notice that $\mathscr{F}'$ retains all the mentioned properties of $\tilde{\mathscr{F}}$. The function $t\colon M\to \mathbb{R}$, $t=d'_S$, satisfies Eq.\ (\ref{jpo}) over every $C'$-timelike curve $x\colon [0,1]\to M$.
%\begin{equation} \label{jpo}
%\int \mathscr{F}'(\dot x) \dd t=\ell'(x) \le  t(x(1))-t(x(0)).
%\end{equation}
By Thm.\ \ref{nja} there is a neighborhood $U$ of $S$ such that $t\vert_U$ has the same regularity as $S$.
%has no focal point in $U$, that is, $U$ is diffeomorphic to an open set $\exp_S'\!\!{}^{-1}(U)$ of the $\mathscr{F}'$-normal bundle $N\subset TS$ (i.e.\ a vector belongs to $N$ if it is proportional to some $n\in \textrm{Int} C$ such that $\textrm{ker}  g'_n(n,\cdot)=TS$ where $g'$ is the vertical Hessian of $\mathscr{L}':=-{\mathscr{F}'}^2/2$). Over $N$ we have the smooth function  which gives the signed $\mathscr{F}'$-length of the timelike vectors belonging to it (given by $-\mathscr{F}'(-v)$ if $v\in -C'$), thus $d'_S\vert_U$ which is the composition of these two smooth functions is smooth.
In particular, $t\vert_U$ is $C^1$ thus using the inequality (\ref{jpo})  we have on $U$, (just consider an $\mathscr{F}'$-geodesic with tangent $v$)  $\dd t(v)\ge \mathscr{F}'(v)$ for every  $C'$-timelike vector $v$, and hence the inequality holds  for every $v\in C$.  We conclude that $\dd t(v)\ge \mathscr{F}'(v)-\Vert v \Vert_h$  for every  $v\in C\vert_U$.
Let $A$, $V$, be  closed and open neighborhoods of $S$, respectively, such that  $A\subset V\subset U$. Let $\alpha\colon M\to \mathbb{R}$ be a positive continuous function such that $\alpha \vert_{M\backslash A} <\vert t\vert_{M\backslash A}\vert $. By Theorem \ref{moz} there is a continuous function $\tau$, $\vert \tau-t\vert<\alpha$, that coincides with $t$ on $A$, has the regularity of $t$ and hence of $S$ on $V$, is smooth outside $V$,
 and satisfies  $\dd \tau(v)\ge \mathscr{F}'(v)-\Vert v \Vert_h$ for every $v\in C$. But $\mathscr{F}'(v)- \Vert v\Vert_h \ge \mathscr{F}'(v)/2 > \mathscr{F}(v)$ and $\mathscr{F}'(v)- \Vert v\Vert_h \ge \Vert v \Vert_h$. The Cauchy property follows from $h$-steepness and the fact that $h$ is complete. Finally, since $\tau$ coincides with $t$ on $A$, and has the same sign of $t$ outside $A$, we have $\tau^{-1}(0)= S$.
\end{proof}

The following result is analogous to the previous one for rough $S$.

\begin{theorem} \label{koh}
Let $(M,\mathscr{F})$ be a closed Lorentz-Finsler space and let $h$ be a complete Riemannian metric.
 Every stable Cauchy hypersurface $S$ is the zero level set of a Cauchy time function $\tau$, which is smooth $h$-steep and strictly $\mathscr{F}$-steep (hence temporal)  outside $S$.
\end{theorem}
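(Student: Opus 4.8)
The plan is to imitate the proof of Theorem \ref{ngy}, but now with a rough stable Cauchy hypersurface $S$ in place of a smooth spacelike one, so there is no interior neighbourhood $U$ on which the signed distance is already $C^1$ and the argument becomes simpler. First I would invoke Theorem \ref{sta} (or rather its easy half, since $S$ is already assumed stable) together with the remark after the second definition to pass to a locally Lipschitz proper cone structure $C'>C$ for which $(M,C')$ is globally hyperbolic and $S$ is a \emph{stable} Cauchy hypersurface for $C'$; by \cite[Thm.\ 2.23]{minguzzi17} we may even take $C'$ smooth strongly convex. Then, exactly as in the proof of Theorem \ref{ngy}, I would fix a smooth positive homogeneous $\tilde{\mathscr F}\colon C'\to[0,+\infty)$ with $\tilde{\mathscr F}(\p C')=0$ and Lorentzian vertical Hessian, set $\mathscr F'=\varphi\tilde{\mathscr F}$ with $\varphi\colon M\to(0,+\infty)$ chosen so large that $\mathscr F'>2\mathscr F$ on $C$ and $\mathscr F'^{-1}(1)\cap C$ lies in the unit ball of $4h$, so that $2\Vert v\Vert_h\le \mathscr F'(v)$ for all $v\in C$.

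Next I would consider the signed Lorentz--Finsler distance $t:=d'_S$ for $(M,\mathscr F')$, which by the propositions of Section \ref{nog} is finite and continuous and satisfies the anti-Lipschitz inequality
\[
t(q)-t(p)\ge d'(p,q)=\int_x \mathscr F'(\dot x)\,\dd t
\]
over every $C'$-timelike curve $x$ from $p$ to $q$. I would then apply the smoothing Theorem \ref{moz} in its version with only the lower function $\underline F=\mathscr F'$ present: since $t$ is only continuous (not $C^1$ near $S$), I take $A=G=\emptyset$, pick any positive continuous $\alpha\colon M\to(0,+\infty)$ with $\alpha\vert_{M\setminus S}<|t|_{M\setminus S}|$ (using that $t$ vanishes exactly on $S$ and is nonzero off it), and obtain a smooth $\tau$ with $|\tau-t|<\alpha$ and $\dd\tau(v)\ge \mathscr F'(v)-\Vert v\Vert_h$ for every $v\in C$. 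The estimates $\mathscr F'(v)-\Vert v\Vert_h\ge \mathscr F'(v)/2>\mathscr F(v)$ and $\mathscr F'(v)-\Vert v\Vert_h\ge\Vert v\Vert_h$ then give that $\tau$ is strictly $\mathscr F$-steep (hence temporal) and $h$-steep, and since $h$ is complete, $h$-steepness forces $\tau$ to be a Cauchy time function in the sense that every inextendible continuous causal curve has $\tau$ ranging over all of $\mathbb R$ along it.

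It remains to arrange $\tau^{-1}(0)=S$. This is where a little care is needed: the smoothing only gives $|\tau-t|<\alpha$, which keeps $\tau$ of the same sign as $t$ on $M\setminus S$ (by the choice of $\alpha$) but does not pin $\tau$ to zero on $S$ itself, and $t$ is not $C^1$, so I cannot simply ask the penultimate statement of Theorem \ref{moz} to force $\tau=t$ on a neighbourhood of $S$ (there is no such neighbourhood on which $t$ is $C^1$). The clean fix is to note that $\tau$ is a temporal function whose sign changes exactly across $S$: by continuity $\tau=0$ somewhere on each inextendible causal curve, and since $\tau$ is strictly increasing along causal curves and $\tau>0$ on $J^+(S)\setminus S$, $\tau<0$ on $J^-(S)\setminus S$ (Lemma \ref{mfp} gives the needed decomposition $M=J^+(S)\cup J^-(S)$, $J^+(S)\cap J^-(S)=S$), the zero set of $\tau$ is contained in $S$; conversely every point of $S$ lies between points where $\tau$ has opposite signs, so by the intermediate value theorem along a causal curve through that point $\tau$ vanishes there, giving $\tau^{-1}(0)\supseteq S$. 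I expect this sign-chasing endgame, reconciling $\tau^{-1}(0)=S$ with the merely $C^0$ nature of $t$, to be the main obstacle; everything else is a direct transcription of the proof of Theorem \ref{jao}.
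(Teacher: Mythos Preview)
There is a genuine gap. Your choice of $\alpha$ is impossible: you ask for a \emph{continuous} function $\alpha\colon M\to(0,+\infty)$ with $\alpha<|t|$ on $M\setminus S$, but $t=d'_S$ is continuous and vanishes on $S$, so for $p\in S$ and $q\to p$ with $q\notin S$ you would need $\alpha(q)<|t(q)|\to 0$ while $\alpha(q)\to\alpha(p)>0$. (In Theorem~\ref{ngy} this works only because the constraint $\alpha<|t|$ is imposed outside a \emph{closed neighbourhood} $A$ of $S$, where $|t|$ is locally bounded away from~$0$.) Worse, if your global smoothing did go through it would produce a \emph{globally smooth} $\tau$ with $\dd\tau\ne 0$ everywhere and $\tau^{-1}(0)=S$; by the implicit function theorem $S$ would then be a smooth hypersurface, contradicting the hypothesis that $S$ is an arbitrary (possibly rough) stable Cauchy hypersurface. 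So the conclusion of the theorem---smoothness only \emph{outside} $S$---is not an accident but a necessity.

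The paper's remedy is exactly to give up smoothness across $S$: one applies Theorem~\ref{moz} separately on the two \emph{open} manifolds $J^+(S)\setminus S$ and $J^-(S)\setminus S$ (using Lemma~\ref{mfp}), with a positive continuous $\alpha$ on each piece that tends to~$0$ at the boundary~$S$. This yields smooth $h$-steep, strictly $\mathscr F$-steep functions $\tau^\pm$ on the two pieces with $\tau^\pm\to 0$ at~$S$; setting $\tau=0$ on $S$ gives a continuous function on~$M$ that is smooth and temporal off~$S$, has the right sign on each side (by the steepness along causal curves issuing from~$S$), and is Cauchy by $h$-steepness and completeness of~$h$. Your setup of $C'$, $\mathscr F'$ and $t=d'_S$ is fine; it is only the single global application of the smoothing theorem that fails and must be replaced by this two-piece argument.
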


%Notice that $\tau$ is a $C^0$ rushing function in the sense of \cite{minguzzi18b}, namely it satisfies for every $(p,q)\in J$, $\tau(q)-\tau(p)\ge d(p,q)$. By \cite[Thm.\ 1.28]{minguzzi18b} the $C^1$ rushing functions are the $\mathscr{F}$-steep functions, so this result can be understood as the $k=0$ case of the previous one.

\begin{proof}
Let $(M,\mathscr{F}')$ be any locally Lipschitz proper Lorentz-Finsler space such that $C'>C$ is globally hyperbolic and $S$ is a  stable Cauchy hypersurface for  $C'$. Moreover let $\mathscr{F}'$ be chosen such that $\mathscr{F}'(\p C')=0$, and sufficiently large so that $\mathscr{F}'> 2 \mathscr{F}$ on $C$, and such that $\mathscr{F}'{}^{-1}(1)\cap C$,  is contained in the unit ball of $4h$. As a consequence, for every $v\in C$, $2 \Vert v \Vert_h \le  \mathscr{F}'(v)$. The function $t\colon M\to \mathbb{R}$, $t=d'_S$, satisfies Eq.\ (\ref{jpo}) over every $C'$-timelike curve $x\colon [0,1]\to M$.
%\[
%\int \mathscr{F}'(\dot x) \dd t=\ell'(x) \le  t(x(1))-t(x(0)).
%\]
Consider the spacetime $J^+(S)\backslash S$ and let $\alpha\colon J^+(S)\backslash S \to (0,+\infty)$,  be a positive continuous function such that for every $q\in S$,  $\alpha(p)\to 0$ for $p\to q$. By  \cite[Thm.\ 3.9]{minguzzi17} there is a function $\tau^+\colon J^+(S)\backslash S \to \mathbb{R}$ such that $\vert \tau^+-t \vert<\alpha$ and for every $v\in C$, $\mathscr{F}'(v)-\Vert v \Vert_h\le \dd \tau(v)$. But $\mathscr{F}'(v)- \Vert v\Vert_h \ge \mathscr{F}'(v)/2 > \mathscr{F}(v)$ and $\mathscr{F}'(v)- \Vert v\Vert_h \ge \Vert v \Vert_h$. Similarly, we obtain $\tau^-$ and then the function $\tau$ equal to zero on $S$, to $\tau^+$ on
$J^+(S)\backslash S$ and to $\tau^-$ on $J^-(S)\backslash S$ has the desired properties (the Cauchy property follows from $h$-steepness and the fact that $h$ is complete).
%Proof of (b).  Since $S$ is locally stably causal it is a  stable Cauchy hypersurface. Let $(M,\mathscr{F}')$ be any smooth (hence locally Lipschitz) proper Lorentz-Finsler space such that $C'>C$ is globally hyperbolic and $S$ is a  stable Cauchy hypersurface for  $C'$. Moreover let $\mathscr{F}'$ be chosen such that $\mathscr{F}'(\p C')=0$, $-\mathscr{F}'^2/2$ has Lorentzian  vertical Hessian,   $\mathscr{F}'$ is so large that $\mathscr{F}'> 2 \mathscr{F}$ on $C$, and  such that $\mathscr{I}'\cap C$,  with $\mathscr{I}'=\mathscr{F}'{}^{-1}(1)$, is contained in the unit ball of $4h$ (one can choose $\mathscr{I}'$ to be an affine sphere asymptotic to $C'$). Then for every $p\in J^+(S)\backslash S$ the distance $d'(S,p)$ is realized by a timelike geodesic for $(M,\mathscr{F}')$, necessarily without focal points in its interior. The geodesics issues orthogonally from $S$ develop a focal point
%[DA COMPLETARE]
\end{proof}

A stable Cauchy time function for a closed cone structure $(M,C)$ is a function which is a Cauchy time function for some locally Lipschitz proper cone structure $C'>C$, and hence for $C$.

\begin{corollary}
The stable Cauchy hypersurfaces are the zero level sets of the stable Cauchy time functions.
\end{corollary}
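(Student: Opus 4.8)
The plan is to prove the two inclusions of the claimed equality of classes separately; both are immediate consequences of Theorem \ref{koh} together with the relevant definitions, the only point needing attention being the careful choice of an auxiliary cone structure so that the word ``stable'' is genuinely obtained.

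For the direction in which a stable Cauchy time function gives rise to a stable Cauchy hypersurface, let $\tau$ be a stable Cauchy time function for $(M,C)$. By definition there is a locally Lipschitz proper cone structure $C'>C$ for which $\tau$ is a Cauchy time function; in particular its zero level set $S:=\tau^{-1}(0)$ is a Cauchy hypersurface of $(M,C')$ (this is built into the notion of Cauchy time function, exactly as in the argument for $(c)\Rightarrow(b)$ in Theorem \ref{jao}). Since $C'>C$ is locally Lipschitz and proper, $S$ is, by the very definition of stable Cauchy hypersurface, a stable Cauchy hypersurface of $(M,C)$.

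For the converse, let $S$ be a stable Cauchy hypersurface of $(M,C)$. First I would use the remark following the definition of stable Cauchy hypersurface (which rests on \cite[Thm.\ 2.23]{minguzzi17}) to produce a locally Lipschitz proper cone structure $C''>C$ for which $S$ is still a \emph{stable} Cauchy hypersurface. Then I would apply Theorem \ref{koh} to the closed cone structure $(M,C'')$, regarded as a closed Lorentz-Finsler space with $\mathscr{F}=0$, together with any complete Riemannian metric $h$: this yields a Cauchy time function $\tau$ of $(M,C'')$ with $\tau^{-1}(0)=S$ (indeed smooth and temporal off $S$, which is more than we need here). Since $C''>C$ is locally Lipschitz and proper, $\tau$ is by definition a stable Cauchy time function of $(M,C)$, and $\tau^{-1}(0)=S$, finishing the proof.

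The step I expect to require the most care — and the reason the preliminary passage from $C$ to $C''$ cannot be skipped — is that Theorem \ref{koh} applied directly to $(M,C)$ would only deliver a Cauchy time function of $(M,C)$ itself, whereas stability demands a Cauchy time function for some strictly wider locally Lipschitz proper structure; running the theorem on $(M,C'')$, with $C''$ in the role of the base cone structure, manufactures precisely that room at no extra analytic cost. One should also record the elementary monotonicity already noted just above the statement: a Cauchy time function for $C''$ is automatically one for $C$, since every $C$-causal curve is $C''$-causal, so the $\tau$ produced is simultaneously a Cauchy time function for $C''$ and for $C$, consistent with the notion of stable Cauchy time function.
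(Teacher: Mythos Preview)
Your proof is correct and essentially identical to the paper's own argument: both directions are handled the same way, with the key step for the forward direction being the passage via \cite[Thm.\ 2.23]{minguzzi17} to an intermediate locally Lipschitz proper cone structure $C''$ on which $S$ is still a \emph{stable} Cauchy hypersurface, followed by an application of Theorem \ref{koh} to $(M,C'')$. Your explicit remark on why this intermediate step cannot be skipped is a helpful clarification that the paper leaves implicit.
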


\begin{proof}
If $S$ is a stable Cauchy hypersurface for $(M,C)$ then it is a Cauchy hypersurface for some locally Lipschitz proper cone structure $\tilde C>C$, and hence a stable Cauchy hypersurface for some locally Lipschitz proper cone structure  $C'$, intermediate between the previous ones $C<C'<\tilde C$, cf.\ \cite[Thm.\ 2.23]{minguzzi17}. By Thm.\ \ref{koh} there is a Cauchy time function $t$ for $(M,C')$ such that $S=t^{-1}(0)$.

Let $t$ be a stable Cauchy time function, then there is a locally Lipschitz proper cone structure $C'>C$ such that $t$ is a Cauchy time function for $C'$. In particular, $t^{-1}(0)$ is a Cauchy hypersurface for $C'$. Thus the zero level set of a stable Cauchy time function is a stable Cauchy hypersurface.
\end{proof}

In the following  result by $C^k$ {\em closed acausal spacelike hypersurface with boundary} we understand an acausal spacelike hypersurface $S$ which is a closed set for the topology of $M$ (and hence includes its edge) and can be slightly extended to an acausal spacelike hypersurface $\tilde S\supset S$, which is a $C^k$ submanifold,  does not comprise its edge, and is such that $\textrm{edge} \tilde S\cap \textrm{edge} S=\emptyset$. In other words $S$ can be slightly enlarged so as to comprise the previous edge in the new relative interior points but maintaining the causality and differentiability properties.

\begin{theorem} \label{ngz}
Let $(M,C)$ be a globally hyperbolic closed cone structure.
Let $S$  be a $C^{1,1}$ or a $C^k$, $2\le k\le \infty$,   compact acausal spacelike hypersurface with boundary. There is an open neighborhood $U\supset S$ such that for any chosen open neighborhood $V \subset U$, $V\supset S$, there is a  spacelike Cauchy hypersurface $\Sigma \supset S$, which has the regularity of $S$ in $V$ and is smooth elsewhere.
\end{theorem}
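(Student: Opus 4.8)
The plan is to reproduce the signed-distance-plus-smoothing argument of Thm.\ \ref{ngy}; the one genuinely new ingredient is a \emph{topological} extension of $S$ to a (merely acausal) Cauchy hypersurface that still coincides with $S$ near $S$, to which the signed distance function can then be attached. First I would perform the standard reductions. By definition, enlarge $S$ to an acausal spacelike $C^k$ (resp.\ $C^{1,1}$) hypersurface $\tilde S\supset S$ without edge and with $\textrm{edge}\,\tilde S\cap\textrm{edge}\,S=\emptyset$; since $S$ is compact and lies in the interior of $\tilde S$, fix a relatively compact open piece $\tilde S''$ with $S\subset\tilde S''$, $\overline{\tilde S''}\subset\tilde S$, and $\textrm{edge}\,\tilde S''$ disjoint from $\textrm{edge}\,S$. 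Since global hyperbolicity is stable and, by a limit curve argument as in Lemma \ref{mfo}, the compact $C$-acausal $C$-spacelike hypersurface-with-boundary $\overline{\tilde S''}$ is $C'$-acausal for all sufficiently narrow $C'>C$, fix a smooth strongly convex proper $C'>C$ with $(M,C')$ globally hyperbolic and $\tilde S''$ both $C'$-spacelike and $C'$-acausal. As in Thm.\ \ref{ngy}, also fix a regular Lorentz--Finsler $\mathscr{F}'$ on $C'$ (so that $(M,\mathscr{F}')$ is a locally Lipschitz proper Lorentz--Finsler space) with $\mathscr{F}'>2\mathscr{F}$ on $C$ and $\mathscr{F}'{}^{-1}(1)\cap C$ contained in the $4h$-unit ball, whence $2\Vert v\Vert_h\le\mathscr{F}'(v)$ on $C$. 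By Thm.\ \ref{nja} and the remark following it (the Cauchy property can be dropped), the $\mathscr{F}'$-signed distance $d_{\tilde S''}$ is finite and has the regularity of $S$ on a tubular neighbourhood of $\tilde S''$, vanishing there exactly on $\tilde S''$.

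The key step, and the main obstacle, is to promote $\tilde S''$ to a \emph{global} anti-Lipschitz datum. I would extend $\tilde S''$ to a locally stably acausal Cauchy hypersurface $\Sigma_1$ for $(M,C')$ which equals $\tilde S''$ on a neighbourhood of $S$. This is purely topological: realizing $\tilde S''$ as a graph with a uniform spacelike margin over a piece of an auxiliary smooth $C'$-Cauchy hypersurface, one extends the graphing function to the whole auxiliary hypersurface keeping the margin and leaving it unchanged near $S$; the graph of the extension is the desired $\Sigma_1$ (this is where compactness of $S$ and the separation $\textrm{edge}\,\tilde S\cap\textrm{edge}\,S=\emptyset$ enter, and it is the analogue of the room-making step in the Lorentzian proof of \cite[Thm.\ 1.1]{bernal06}). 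Put $t:=d'_{\Sigma_1}$, the $\mathscr{F}'$-signed distance from $\Sigma_1$. By the propositions preceding Thm.\ \ref{nja}, $t\colon M\to\mathbb{R}$ is finite, continuous and a rushing time, so $\ell'(x)\le t(x(1))-t(x(0))$ over every $C'$-timelike curve $x$ --- exactly the hypothesis of Thm.\ \ref{moz} with $\underline F=\mathscr{F}'$ and no $\overline F$. Moreover, since $\Sigma_1$ is acausal and coincides with $\tilde S''$ near $S$, for $q$ in a small enough neighbourhood $U$ of $S$ every $C'$-causal curve joining $\Sigma_1$ to $q$, or $q$ to $\Sigma_1$, meets $\Sigma_1$ only in the $C^k$ region $\tilde S''$; hence $t=d_{\tilde S''}$ on $U$, so $t$ has the regularity of $S$ on $U$, its zero set there contains $S$, and $\dd t(v)\ge\mathscr{F}'(v)\ge\mathscr{F}'(v)-\Vert v\Vert_h$ for $v\in C\vert_U$, i.e.\ $t$ satisfies (\ref{kid}) on $U$.

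It remains to smooth $t$ away from $S$. Given the prescribed $V\subset U$, choose a compact $A$ with $S\subset\textrm{Int}\,A\subset A\subset V$, set $G:=V$, pick any continuous $\alpha>0$, and apply Thm.\ \ref{moz} with $\tau=t$: this yields $\hat\tau\colon M\to\mathbb{R}$ with $\hat\tau=t$ on $A$, of the regularity of $S$ on $V$, smooth on $M\setminus V$, and with $\dd\hat\tau(v)\ge\mathscr{F}'(v)-\Vert v\Vert_h$ for every $v\in C$. Since $\mathscr{F}'(v)-\Vert v\Vert_h\ge\mathscr{F}'(v)/2>\mathscr{F}(v)$ and $\mathscr{F}'(v)-\Vert v\Vert_h\ge\Vert v\Vert_h$ on $C$, the function $\hat\tau$ is strictly $\mathscr{F}$-steep (hence temporal) and $h$-steep; as $h$ is complete, it is a Cauchy temporal function, so $\Sigma:=\hat\tau^{-1}(0)$ is a spacelike Cauchy hypersurface of $(M,C)$, of the regularity of $S$ on $V$ and smooth elsewhere. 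Finally $\hat\tau=t=d_{\tilde S''}$ on $A$ and $d_{\tilde S''}$ vanishes on $S\subset\textrm{Int}\,A$, so $S\subset\Sigma$. The $C^{1,1}$ and $C^k$ cases run in parallel, the only difference being that in the former Thm.\ \ref{nja} delivers $d_{\tilde S''}$ through a lipeomorphism, a regularity then propagated by Thm.\ \ref{moz}.
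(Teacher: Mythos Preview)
Your overall strategy coincides with the paper's: enlarge $S$ slightly to a compact acausal spacelike piece $\hat S$ (your $\overline{\tilde S''}$), pass to a wider smooth globally hyperbolic cone $C'$ for which the enlargement remains spacelike and acausal (your limit-curve remark is essentially the argument the paper spells out), manufacture a $C'$-Cauchy hypersurface containing the enlarged piece, take the $\mathscr{F}'$-signed distance from it, observe that near $S$ this distance agrees with $d'_{\hat S}$ and hence has the regularity of $S$ by Thm.~\ref{nja}, and finally smooth via Thm.~\ref{moz}. Your last two paragraphs match the paper almost verbatim.

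The one genuine difference is in how the Cauchy hypersurface through $\hat S$ is produced. You propose to realize $\tilde S''$ as a graph over an auxiliary smooth $C'$-Cauchy hypersurface and to extend the graphing function globally while preserving a uniform spacelike margin. The paper instead picks any stable $C'$-Cauchy hypersurface $H$ lying strictly to the past of $\hat S$ and sets
\[
L=\partial\bigl[J'^{-}(H)\cup J'^{-}(\hat S)\bigr],
\]
an achronal boundary. A short non-imprisonment argument (compactness of $J'^{+}(H)\cap J'^{-}(\hat S)$) shows $L$ is Cauchy for $(M,C')$, and the $C'$-acausality of $\hat S$ gives $\hat S\subset L$; then $t=d'_L$ and, on a neighbourhood $U\subset D(\hat S\setminus\textrm{edge}\,\hat S)$ of $S$, $t=d'_{\hat S\setminus\textrm{edge}\,\hat S}$. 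This is considerably lighter than your graphing route: no gradient control on an extended function (the term $f\,\dd\chi$ in a cutoff extension is not automatically small), and no separate verification that the extended graph is globally $C'$-acausal and met by every inextendible $C'$-causal curve. Your route, if those details are filled in, would yield a globally $C^1$-spacelike $\Sigma_1$ instead of the merely Lipschitz $L$, but that extra regularity is never used downstream: only the local identity $d'_L=d'_{\hat S}$ near $S$ matters, and the smoothing theorem handles the rough part of $L$.
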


% an $h$-steep and strictly $\mathscr{F}$-steep (hence temporal)  Cauchy function $\tau$, $\tau^{-1}(0)=S$, which has the same regularity of $S$ in $V$ and is smooth elsewhere.

This result holds also for Lipschitz regularity of $S$, provided $S$ is  locally stably acausal (rather than spacelike) and provided it can be slightly extended over the edge preserving its causality properties. The proof is obtained through a minimal modification of the following one.

Some compactness assumption is necessary. Partial Cauchy hypersurfaces are not extendable to Cauchy hypersurfaces in general, consider the hyperboloid $t=-(1+x^2+y^2+z^2)^{1/2}$ in Minkowski spacetime.

\begin{proof}
 Let $\hat S$, $S\subset \hat S\subset S$ be a compact acausal spacelike hypersurface with boundary, such that $\textrm{edge} \ S\cap \textrm{edge} \hat S=\emptyset$.
By the stability of global hyperbolicity we can find a locally Lipschitz proper cone structure $\tilde C>C$ such that $(M,\tilde C)$ is globally hyperbolic.
Moreover, we can find a locally Lipschitz proper cone structure $C'$, $C<C'<\tilde C$ so narrow that $\hat S$ is spacelike for $C'$. Actually, $C'$ can be chosen so narrow that $\hat S$ is  acausal  for $C'$. Indeed, let us consider a sequence of locally Lipschitz proper cone structures  $C<C_n<\tilde C$, such that $\hat S$ is spacelike for every $C_n$ and $C_n\to C$ pointwise. Let $H$ be a stable Cauchy hypersurface for $\tilde C$ such that $\hat S\cap \tilde J^-(H)=\emptyset$. If the statement were false we could find a sequence of continuous $C_n$-causal curves $\sigma_n$ starting and ending at $\hat S$. The curves cannot contract to a point (due to \cite[Thm. 2.33]{minguzzi17} applied to $\tilde S$), thus by the limit curve theorem and the acausality of $\hat S$ in $(M,C)$, there is a past inextendibile limit continuous $C$-causal curve $\sigma^q$, ending at some point $q\in \hat S$. But then $\sigma^q$ intersects $H$ and hence enters $ J^-(H)\subset \tilde J^-(H)$ and so do all $\sigma_n$ for sufficiently large $n$, a contradiction with $\tilde J^+(\hat S)\cap H=\emptyset$.

So let $C'>C$ be as established, and let us choose it smooth. Let $h$ be a complete Riemannian metric. Moreover, let $\mathscr{F}'$, $\mathscr{F}'{}^{-1}(0)=\p C'$, be a Lorentz-Finsler fundamental function, as regular  as it is described in Sec.\ \ref{nog} and with $\mathscr{F}'$ so large that  $\mathscr{F}'{}^{-1}(1)\cap C$ is contained in the unit ball of $4h$. As a consequence, for every $v\in C$, $2 \Vert v \Vert_h \le  \mathscr{F}'(v)$.
 Let $H$ be a stable Cauchy hypersurface for $C'$ such that $\hat S\cap J'{}^-(H)=\emptyset$. Since $\hat S$ is compact, the set $J'{}^+(H)\cap J'{}^{-1}(\hat S)$ is compact and hence cannot future imprison any continuous $C'$-causal curve (remember that $(M,C')$ is globally hyperbolic hence non-total imprisoning). It follows that $L=\p[J'{}^-(H)\cup J'{}^-(\hat S)]$ is a Cauchy hypersurface for $(M,C')$
such that $\hat S \subset L$, due to the $C'$-acausality of $\hat S$.

The function $t\colon M\to \mathbb{R}$, $t=d'_L$, satisfies Eq.\ (\ref{jpo}) over every $C'$-timelike curve $x\colon [0,1]\to M$.
%\[
%\int \mathscr{F}'(\dot x) \dd t=\ell'(x) \le  t(x(1))-t(x(0)).
%\]
Let $U\subset D(\hat S\backslash \textrm{edge} \hat S)$ be a neighborhood of $S$ so small that  $t\vert_U =d'_{\hat S\backslash \textrm{edge} \hat S}\vert_U$ has the same regularity as $\hat S\backslash \textrm{edge} \hat S$ and hence as $S$, cf.\ Thm.\ \ref{nja}.
In particular, $t\vert_U$ is $C^1$ thus using the inequality (\ref{jpo})  we have on $U$, (just consider a $\mathscr{F}'$-geodesic with tangent $v$)  $\dd t(v)\ge \mathscr{F}'(v)$ for every $C'$-timelike vector $v$, and hence the inequality holds  for every $v\in C$.
 We conclude that $\dd t(v)\ge \mathscr{F}'(v)-\Vert v \Vert_h$  for every  $v\in C\vert_U$.

Let $A,V$ with $A\subset  V\subset U$, be closed and open neighborhoods of $S$, respectively. Let $\alpha\colon M\to \mathbb{R}$ be a positive continuous function. By Theorem \ref{moz} there is a continuous function $\tau$, $\vert \tau-t\vert<\alpha$, that coincides with $t$ on $A$, has the regularity of $t$ and hence of $S$ on $V$, is smooth outside $V$,
 and satisfies  $\dd \tau(v)\ge \mathscr{F}'(v)-\Vert v \Vert_h$ for every $v\in C$. But  $\mathscr{F}'(v)- \Vert v\Vert_h  \ge \Vert v \Vert_h$, thus the function $\tau$ is Cauchy due to the $h$-steepness and the fact that $h$ is complete. Notice that since $\tau$  coincides with $t$ on $A$, $\Sigma:=\tau^{-1}(0)\supset S$, and $\Sigma$ being the level set of a $C^k$ (smooth outside $V$) Cauchy temporal function is a $C^k$ Cauchy hypersurface (smooth outside $V$).
\end{proof}

\subsection{Smooth separation and density theorems}

The sets introduced with the following definition are called {\em trapping domains} in the terminology of \cite{bernard18}.
\begin{definition}
Let $(M,C)$ be a closed cone structure. We say that $F$ is a {\em stable future set} if there is a locally Lipschitz proper cone structure $\tilde C>C$ such that $\tilde I^+(F)\subset F$. Stable past sets are defined analogously.
\end{definition}

Notice that if a stable future set $F$ is open, then $\tilde I^+(F)=F$, with $\tilde C$ as above \cite[Prop.\ 2.13]{minguzzi17}. If $F$ is a stable future set then $\tilde I^+(F)$ is an open stable future set because $\tilde I^+(\tilde I^+(F))=\tilde I^+(F)$. Since for a locally Lipschitz proper cone structure $\tilde C$ the identity $\tilde I\circ \tilde J\cup \tilde J\circ \tilde I\subset \tilde I$ holds true \cite[Thm.\ 2.7]{minguzzi17}, all the standard results for future sets in Lorentzian geometry hold true \cite[Prop.\ 2.84]{minguzzi18b}.

\begin{theorem} \label{jpa}
Let $(M,C)$ be a stably causal closed cone structure  and let $A$ and $B$ be a stable past set and a stable future set, respectively, such that $A\cap B=\emptyset$. Then there is a smooth isotone function $t\colon M\to [0,1]$, temporal on $M\backslash (A\cup B)$, such that $t^{-1}(0)=A$ and $t^{-1}(1)=B$.
\end{theorem}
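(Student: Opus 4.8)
The plan is to reduce Theorem \ref{jpa} to an application of the smoothing theorem \ref{moz}, exactly as in the proofs of Theorems \ref{jao} and \ref{ngy}, by first manufacturing a continuous anti-Lipschitz function that already separates $A$ and $B$, and then smoothing it. First I would replace the sets by their stable interiors: since $A$ is a stable past set there is a locally Lipschitz proper cone structure $\tilde C_A>C$ with $\tilde I^-_A(A)\subset A$, and similarly $\tilde C_B>C$ with $\tilde I^+_B(B)\subset B$; passing to a common locally Lipschitz proper refinement $C'>C$ with $C<C'<\tilde C_A$ and $C<C'<\tilde C_B$ (using \cite[Thm.\ 2.23]{minguzzi17}), I get a single locally Lipschitz proper cone structure $C'>C$ with $I'^-(A)\subset A$ and $I'^+(B)\subset B$. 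Since $(M,C)$ is stably causal, $C'$ can moreover be taken causal, indeed stably causal, so $(M,C')$ admits a (smooth) time function. Shrinking $A$ and $B$ slightly to their $C'$-interiors is harmless for the disjointness hypothesis and gives me the extra room the convolution step needs.

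Next I would build the continuous separating function. Put $G^- = I'^-(A)$ and $G^+ = I'^+(B)$; these are disjoint open sets (disjoint because $G^-\subset A$, $G^+\subset B$, $A\cap B=\emptyset$), $G^-$ is a $C'$-past set and $G^+$ a $C'$-future set. On the complement $M\setminus(A\cup B)$ pick a locally Lipschitz proper Lorentz--Finsler structure $\mathscr{F}'$ with $\mathscr{F}'^{-1}(0)=\partial C'$ (as in Sec.\ \ref{nog}), and define a signed-distance-type function there using the two "walls" $\partial A$ and $\partial B$: roughly $t_0(p) = f\big(d'(\overline{G^-},p)\big)$ measured from $A$, clipped so that $t_0\to 0$ as $p\to A$ and $t_0\to 1$ as $p\to B$. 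A cleaner route that avoids ad hoc clipping is to use the product/level-set trick: apply Theorem \ref{jao} to $(M,\mathscr{F}')$ restricted to $M\setminus(A\cup B)$ — which is globally hyperbolic once one checks the causally convex hull of compact sets stays compact there, the obstruction being imprisonment near $A\cup B$ — to get a Cauchy temporal function, then compose with a diffeomorphism $\mathbb{R}\to(0,1)$ and extend by $0$ on $A$ and $1$ on $B$. In either case the resulting $t_0$ is continuous on $M$, equals $0$ precisely on $A$ and $1$ precisely on $B$, and along every $C'$-timelike curve $x$ one has $\int_x \mathscr{F}'(\dot x)\,\dd t \le t_0(x(1))-t_0(x(0))$, i.e.\ the hypothesis of Theorem \ref{moz} holds with $\underline F=\mathscr{F}'$ (and no $\overline F$).

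Then I would invoke Theorem \ref{moz} in its "penultimate statement" form. Take the closed set to be $A\cup B$ — on a neighborhood $G\supset A\cup B$ the function $t_0$ is already $C^1$ (in fact locally constant $0$ near $A$ and $1$ near $B$), hence trivially satisfies (\ref{kid}) there with $\underline F=\mathscr{F}'$ since $\dd t_0=0\ge \mathscr{F}'(v)-\|v\|_h$ is false in general, so here I must instead take $G$ to be a neighborhood on which the required differential inequality genuinely holds, or equivalently note that near $A$ and $B$ the cone structure can be arranged so that the inequality is vacuous because there are no relevant timelike vectors pointing the wrong way; more simply, I apply the version of Theorem \ref{moz} that keeps $\hat t$ equal to $t_0$ on the closed set $A\cup B$ and smooth elsewhere, choosing the error bound $\alpha$ small enough (e.g.\ $\alpha<\min(t_0,1-t_0)$ away from $A\cup B$) so that $\hat t^{-1}(0)=A$ and $\hat t^{-1}(1)=B$ are preserved. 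The output $\hat t$ is smooth on $M$, satisfies $\dd\hat t(v)\ge \mathscr{F}'(v)-\|v\|_h$ for all $v\in C$; since $\mathscr{F}'$ can be chosen (rescaling by a positive factor $\varphi$ as in Thm.\ \ref{ngy}) so large that $\mathscr{F}'(v)-\|v\|_h>0$ for every $v\in C\setminus 0$, this makes $\hat t$ temporal on all of $M\setminus(A\cup B)$ and isotone on $M$; composing with a smooth increasing squashing onto $[0,1]$ fixing $0$ and $1$ if necessary, $t:=\hat t$ has all the asserted properties.

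The main obstacle, and the step I would spend the most care on, is the manufacture of the continuous separating anti-Lipschitz function $t_0$ with the two prescribed level sets $t_0^{-1}(0)=A$, $t_0^{-1}(1)=B$ while retaining the integral inequality along $C'$-timelike curves. The signed-distance construction of Sec.\ \ref{nog} is tailored to a single Cauchy hypersurface, not to a pair of stable past/future sets, so I expect the honest argument to proceed by working in the open sub-cone-structure $(M\setminus(A\cup B), C')$: verifying it is causally simple (or at least that $J'^+$ is closed there, which follows from $A,B$ being stable future/past sets and the non-imprisonment coming from stable causality) and that it admits a temporal function with the right boundary behaviour, analogous to the step $(a)\Rightarrow(c)$ in Theorem \ref{jao} but relative to the "initial hypersurface" $\partial A$ and "final hypersurface" $\partial B$. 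Once $t_0$ is in hand, the smoothing is routine and identical in spirit to the earlier proofs.
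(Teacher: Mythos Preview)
Your proposal has a genuine gap at precisely the point you flag as ``the main obstacle'': the construction of the continuous anti-Lipschitz separating function $t_0$. Neither route you suggest works. The signed-distance idea requires a Cauchy hypersurface and gives an unbounded function, so clipping destroys the integral inequality near the walls. The alternative---applying Theorem~\ref{jao} to $(M\setminus(A\cup B),C')$---fails because that open set is not globally hyperbolic in general: the statement explicitly allows $A=B=\emptyset$, in which case you would be asserting that a merely stably causal $(M,C')$ is globally hyperbolic. Even with $A,B$ nonempty there is no reason the causally convex hull of a compact set in the complement should be compact. Your attempted use of the penultimate clause of Theorem~\ref{moz} also breaks down exactly where you notice it does: near $A$ and $B$ you need $\dd t_0(v)\ge \underline F(v)-\Vert v\Vert_h$, but $t_0$ is locally constant there while $\underline F=\mathscr{F}'$ can be made large, so the inequality fails.

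The paper's proof sidesteps all of this with two ingredients you do not invoke. First, it obtains the continuous separating function $\tau$ with $\tau^{-1}(0)=A$, $\tau^{-1}(1)=B$ directly from the abstract order-theoretic fact that $(M,\hat J_S)$ is perfectly normally ordered \cite{minguzzi11f}; no distance, no global hyperbolicity. This $\tau$ is only isotone, not anti-Lipschitz. Second, to make it anti-Lipschitz it uses Fathi's trick: add $\epsilon\varphi$ where $\varphi$ is a smooth temporal function for a wider cone $\hat C$ (which exists by stable causality), so that $\tau+\epsilon\varphi$ is $g$-steep for a suitable Riemannian $g$. After smoothing via Theorem~\ref{moz} and composing with a squashing function $\phi_\epsilon$, one gets a smooth isotone $f_\epsilon$ with $f_\epsilon\equiv 0$ on $A$, $f_\epsilon\equiv 1$ on $B$, temporal on $\tau^{-1}((4\epsilon,1-4\epsilon))$; the exact level sets $t^{-1}(0)=A$, $t^{-1}(1)=B$ are then recovered by a convex series $t=\sum_k c_k f_{1/k}$.
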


Observe that the sets $A$ and $B$ can be empty.

\begin{proof}
Let $\tilde C>C$ be a causal locally Lipschitz proper cone structure  such that $\tilde I^-(A)\subset A$ and $\tilde I^+(B)\subset B$, and let $\hat C$, $C<\hat C<\tilde C$, be an intermediate locally Lipschitz proper cone structure. Then \cite[Prop.\ 2.15]{minguzzi17} $\hat J_S\subset \Delta \cup \tilde I$, thus $\hat J^-_S(A)\subset A$ and $\hat J_S^+(B)\subset B$. Since the Seifert relation $\hat J_S$ is closed, we know from \cite[Cor.\ 2.8, Thm.\ 5.3]{minguzzi11f} that $(M,\hat J_S)$ is perfectly normally ordered, that is, there is a continuous $\hat J_S$-isotone (and hence $\hat J$-isotone) function $\tau\colon M\to [0,1]$ such that $\tau^{-1}(0)=A$ and $\tau^{-1}(1)=B$.

Let $\varphi\colon M\to [0,1]$ be a smooth temporal function for $(M,\hat C)$. Let $\epsilon>0$. The function $\epsilon \varphi$ is also smooth temporal, thus there is a  Riemannian metric $g$ such that  $\epsilon \varphi$ is $g$-steep on $(M,\hat C)$, just take the unit balls of $g$ so large on $TM$ that they contain $(\dd \epsilon\varphi)^{-1} (1) \cap \hat C$.
 %then there is a  Riemannian metric $g$ such that $\varphi$ is $g$-steep on $(M,\tilde C)$.
 The function $\tilde \tau=\tau+\epsilon \varphi$ is  $g$-anti Lipschitz on $(M,\hat C)$, since the term $\epsilon  \varphi$ has this property  (this trick, due to Fathi, was used in \cite{chrusciel13} and \cite{bernard18b}). Thus $\tilde \tau$ can be smoothed using Thm.\ \ref{moz} with $\underline{F}(v)=\Vert v\Vert_{g}$, $h=g/4$, $\alpha=\epsilon$. We can find a smooth temporal (for $C$) function $\hat \tau$, $\vert \hat \tau-\tilde \tau\vert<\epsilon$, and hence $\vert \hat \tau-\tau\vert<2\epsilon$.
Let us consider a smooth function $\phi_\epsilon\colon \mathbb{R}\to [0,1]$ which is 0 on $(-\infty,2\epsilon]$, and 1 on $[1-2\epsilon,+\infty)$ and such that $\phi_\epsilon'>0$ elsewhere, then $\phi_\epsilon \circ \hat \tau(A)=0$, $\phi_\epsilon \circ \hat \tau(B)=1$, $(\phi_\epsilon \circ \hat \tau)^{-1}(0)\subset \tau^{-1}([0,4\epsilon])$, $(\phi_\epsilon \circ \hat \tau)^{-1}(1)\subset \tau^{-1}([1-4\epsilon,1])$. Notice that $\phi_\epsilon \circ \hat \tau$ is temporal wherever it differs from 0 and 1.
Let $f_k=\phi_{1/k} \circ \hat \tau$, and observe that for every $p\in M\backslash (A\cup B)$ we can find a sufficiently large $k$ such that $4/k<\tau(p)<1-4/k$, so that $0<f_k(p)<1$ and hence $f_k$ is temporal at $p$. There are constants $c_k>0$ such that, $\sum_k c_k=1$ and $t=\sum_kc_kf_k$ is smooth \cite[Lemma 3.2]{fathi97}, which concludes the proof.
\end{proof}

%\begin{lemma}
%Let $(M,C)$ be a stably causal closed cone structure. Let $A$, $J^-_S(A)\subset A$, and $B$, $J^+_S(B)\subset B$, be disjoint.
%Let $U$ be an open  neighborhood  of $A$ such that $J^-_S(U)\subset U$ and let $V$  be an open neighborhood  of $B$ such that $J^+_S(V)\subset V$, then we can find a locally Lipschitz proper cone structure  $\tilde C>C$, and closed sets $\tilde A$, $\tilde B$ such that $A \subset \tilde I^-(\tilde A)\subset \tilde A\subset U$, $B\subset \tilde I^+ (\tilde B)\subset \tilde B\subset V$.
%\end{lemma}

%\begin{lemma}
%Let $(M,C)$ be a closed cone structure that admits a temporal function $t$. For each $\epsilon>0$ there is a locally Lipschitz proper cone structure $C'>C$ such that: if $(p,q)\in J'$ and $\vert t(q)-t(p)\vert \ge \epsilon$, then $t(p)<t(q)$.
%\end{lemma}

%\begin{proof}
%Let $g$ be a complete Riemannian metric, then $h=g+\dd t\otimes \dd t$ is also a complete Riemannian metric which is also such that $d_h(t^{-1}(a),t^{-1}(b))\ge \vert b-a\vert$. If $\vert t(q)-t(p)\vert \ge \epsilon$ then $d_h(p,q)\ge \epsilon$, thus from \cite[Lemma 3.5]{minguzzi17} we get $t(p)<t(q)$.
%\end{proof}

The previous separation theorem can be improved as follows.

\begin{theorem} \label{mos}
Let $(M,C)$ be a stably causal closed cone structure. Let $A$ be a closed $J_S$-past set, $J^-_S(A)\subset A$, and let  $B$ be a closed $J_S$-future set, $J^+_S(B)\subset B$, which are disjoint. Then there is a smooth isotone function $t\colon M\to [0,1]$, temporal on $M\backslash (A\cup B)$ such that $A=t^{-1}(0)$ and $B=t^{-1}(1)$.
\end{theorem}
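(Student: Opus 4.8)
The plan is to follow the proof of Theorem \ref{jpa} as closely as possible; the only genuinely new point is that the separating function must be extracted from the Seifert relation of $(M,C)$ itself rather than of an intermediate cone. Since $(M,C)$ is stably causal, its Seifert relation $J_S$ is a closed partial order, so $(M,J_S)$ is a closed ordered space; as $A$ and $B$ are disjoint, closed, and respectively $J_S$-past and $J_S$-future, \cite[Cor.\ 2.8, Thm.\ 5.3]{minguzzi11f} provides a continuous $J_S$-isotone (hence $J$-isotone) function $\tau_0\colon M\to[0,1]$ with $\tau_0^{-1}(0)=A$ and $\tau_0^{-1}(1)=B$. This is exactly the first step of the proof of Theorem \ref{jpa}, now applied to $J_S$ of $(M,C)$ directly, which is legitimate precisely because $A$ and $B$ are hypothesised to be $J_S$-sets rather than merely stable past/future sets.

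For the smoothing I would fix a smooth temporal function $\varphi$ for $(M,C)$ and, for $\epsilon>0$, a Riemannian metric $g$ for which $\epsilon\varphi$ is $g$-steep; then $\tau_0+\epsilon\varphi$ is $J$-isotone and $g$-anti-Lipschitz on $(M,C)$. The key step is to upgrade this to anti-Lipschitzness on $(M,\hat C)$ for some locally Lipschitz proper $\hat C>C$ and a slightly smaller metric $g'$ — intuitively, a continuous $J$-isotone function varies at a controlled rate in the directions transverse to $C$, so the steep term $\epsilon\varphi$ still dominates it along the curves of a sufficiently narrow widening $\hat C$. Granting this, Theorem \ref{moz} (applied with $\underline F=\Vert\cdot\Vert_{g'}$, $h=g'/4$, $\alpha=\epsilon$) produces a smooth $\hat\tau$ with $\vert\hat\tau-(\tau_0+\epsilon\varphi)\vert<\epsilon$, hence $\vert\hat\tau-\tau_0\vert<2\epsilon$, and $\dd\hat\tau(v)\ge\tfrac12\Vert v\Vert_{g'}>0$ for $v\in C\setminus\{0\}$, so $\hat\tau$ is temporal for $C$. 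Then, as at the end of the proof of Theorem \ref{jpa}, composing with smooth flatteners supported near the endpoints, carrying out the construction at scales $\epsilon_k\to0$, and forming a convex combination $t=\sum_k c_k\phi_{\epsilon_k}(\hat\tau_k)$ that is smooth by \cite[Lemma 3.2]{fathi97}, one obtains an isotone $t\colon M\to[0,1]$ with $t^{-1}(0)=A$, $t^{-1}(1)=B$, temporal at every $p\notin A\cup B$ because all but finitely many $\phi_{\epsilon_k}(\hat\tau_k)$ are temporal there.

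I expect the main obstacle to be the upgrade in the second paragraph. In Theorem \ref{jpa} the separating function came from the perfect normal ordering of an intermediate cone and was automatically isotone for that wider cone, which is exactly what lets Theorem \ref{moz} open up the cone; here $\tau_0$ is isotone only for $C$, so one must argue separately that $C$ may be widened slightly while keeping $\tau_0+\epsilon\varphi$ anti-Lipschitz — i.e.\ that a continuous causal function is transversally Lipschitz relative to $C$, which one can read off the achronality of the level sets of $\tau_0+\epsilon\varphi$. Should this prove awkward, one can instead bypass it by approximating $A$ and $B$ from outside by closed stable past and future sets $A_n=\overline{\tilde I_n^-(A)\cup A}\downarrow A$, $B_n=\overline{\tilde I_n^+(B)\cup B}\downarrow B$ for cone structures $\tilde C_n\downarrow C$, applying Theorem \ref{jpa} to each disjoint pair $(A_n,B_n)$, and recombining by the same Fathi summation — the delicate part there being the limit curve argument that gives $\bigcap_nA_n=A$ and $\bigcap_nB_n=B$ in the merely stably causal setting. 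The remaining verifications are routine.
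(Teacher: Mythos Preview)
Your primary route has a real gap at precisely the point you flag. The assertion that a continuous $J_S$-isotone $\tau_0$ is ``transversally Lipschitz,'' so that $\tau_0+\epsilon\varphi$ becomes anti-Lipschitz for some $\hat C>C$, cannot be read off achronality of level sets: that achronality is a $C$-notion and says nothing about the behaviour of $\tau_0$ along $\hat C$-causal arcs that are not $C$-causal. Since $\tau_0$ is merely continuous, it may vary arbitrarily fast in directions just outside $C$, and no choice of $\hat C>C$ and smaller $g'$ rescues the inequality; the steep term $\epsilon\varphi$ controls only a fixed finite rate. Note also the order of inclusions $J_C\subset J_S\subset J_{\hat C}$: being $J_S$-isotone is strictly \emph{weaker} than being $J_{\hat C}$-isotone, not stronger. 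So Theorem \ref{moz} cannot be invoked directly on $\tau_0+\epsilon\varphi$.

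Your fallback is the correct strategy and is essentially what the paper does, but the paper chooses the approximating stable sets so that both their disjointness and the convergence to $A$, $B$ are automatic rather than delicate. Rather than taking $A_n=\overline{\tilde I_n^-(A)\cup A}$ (for which neither $A_n\cap B_n=\emptyset$ nor $\bigcap_n A_n=A$ is obvious, the limit curve argument having to cope with the points of $A$ escaping to infinity), the paper reuses the separating function itself: set $\hat\tau=\tau_0+\epsilon\varphi$ with $\varphi\colon M\to(0,1)$ a time function, choose a complete Riemannian metric with suitable distance bounds, and invoke \cite[Lemma 3.5]{minguzzi17} to find $\tilde C>C$ so narrow that $\tilde I^+\big(\hat\tau^{-1}((2\epsilon,\infty))\big)$ misses $\hat\tau^{-1}((-\infty,\epsilon])\supset A$. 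The complement of this $\tilde C$-future set is then a closed stable past set $\tilde A$ with $A\subset\tilde A\subset\tau_0^{-1}([0,3\epsilon])$, and dually $B\subset\tilde B\subset\tau_0^{-1}([1-3\epsilon,1])$. Disjointness is immediate for small $\epsilon$, and $\bigcap_\epsilon\tilde A=A$, $\bigcap_\epsilon\tilde B=B$ follow at once from $\tau_0^{-1}(0)=A$, $\tau_0^{-1}(1)=B$. One then applies Theorem \ref{jpa} at each scale and Fathi-sums exactly as you outline.
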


\begin{proof}
Since the Seifert relation $J_S$ is closed, we know from \cite[Cor.\ 2.8, Thm.\ 5.3]{minguzzi11f} that $(M,J_S)$ is perfectly normally ordered, that is, there is a continuous $J_S$-isotone function $\tau\colon M\to [0,1]$ such that $\tau^{-1}(0)=A$ and $\tau^{-1}(1)=B$.

%Let $\epsilon>0$ and let $h$ be a complete Riemannian metric such that
%\[
%d_h(A, \tau^{-1}([\epsilon,1]))>\epsilon \textrm{ and } d_h(B, \tau^{-1}([0,1-\epsilon]))>\epsilon.
%\]

Let $\epsilon>0$ and let $\varphi\colon M\to (0,1)$ be a time function for $(M,C)$, then $\hat \tau:=\tau+\epsilon
\varphi$ is also a time function.
Let $h$ be a complete Riemannian metric such that
\begin{align*}
d_h\big( \hat \tau^{-1}((-\infty,\epsilon]), \hat \tau^{-1}([2\epsilon,+\infty))\big)&>\epsilon, \\ d_h\big(\hat\tau^{-1}((-\infty,1-2\epsilon]), \hat\tau^{-1}([1-\epsilon, +\infty))\big)&>\epsilon.
\end{align*}
 By \cite[Lemma 3.5]{minguzzi17} we can find a locally Lipschitz proper cone structure $\tilde C>C$ such that the open $\tilde C$-future  set $V=\tilde I^+(\hat \tau^{-1}(2\epsilon,+\infty))$ is disjoint from $\hat \tau^{-1}((-\infty,\epsilon])\supset A$.

 Notice that $\tilde A:=M\backslash V$ is a closed $\tilde C$-past set such that
\[
A\subset \tilde A\subset \hat \tau^{-1}((-\infty,2\epsilon])\subset \tau^{-1}([0,3\epsilon]).
\]
 In other words, $\tilde A$ is a stable past set for $(M,C)$. Dually, $\tilde C>C$ can be further narrowed so that the open $\tilde C$-past  set $U=\tilde I^-(\hat \tau^{-1}(-\infty, 1-2\epsilon))$ is disjoint from $\hat \tau^{-1}([1-\epsilon, +\infty))\supset B$.  Notice that $\tilde B=M\backslash U$ is a closed $\tilde C$-future set such that
\[
B\subset \tilde B\subset \hat\tau^{-1}( [1-2\epsilon, +\infty))\subset \tau^{-1}([1-3\epsilon,1]).
\]
In other words, $\tilde B$ is a stable future set for $(M,C)$. By Thm.\ \ref{jpa}
there is a smooth isotone function $t_\epsilon\colon M\to [0,1]$, temporal on $M\backslash (\tilde A\cup \tilde B)$, such that $t_\epsilon^{-1}(0)=\tilde A$ and $t_\epsilon^{-1}(1)=\tilde B$. There are constants $c_k>0$ such that $\sum_k c_k=1$ and $t=\sum_k c_k t_{1/k}$ is smooth, which concludes the proof.
\end{proof}

We recall  \cite[Thm.\ 3.14]{minguzzi17} that if $t$ is a time function for a closed cone structure we have: $(p,q)\in J_S\backslash \Delta \Rightarrow t(p)<t(q)$. The following density result improves the bound in \cite[Cor.\ 9]{bernard18b}. Interestingly, our proof is completely different.

\begin{theorem}  \label{jif}
Let $t$ be a continuous isotone function on the stably causal closed cone structure $(M,C)$ and let $\epsilon\colon \mathbb{R}\to (0,+\infty)$ be a smooth positive function. Then there is a smooth temporal function $\tau$ such that for every $p\in M$, $\vert \tau(p)-t(p)\vert<\epsilon(t(p))$.
\end{theorem}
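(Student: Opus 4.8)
The plan is to reduce the statement to a single application of the improved smoothing theorem, Theorem~\ref{moz}, after first replacing the cone structure $C$ by a slightly wider locally Lipschitz proper cone structure $\tilde C>C$ for which $t$ is still ``anti-Lipschitz up to an arbitrarily small error''. The subtlety compared with Theorem~\ref{jif}'s predecessors is that the bound on $\vert\tau-t\vert$ must be $\epsilon(t(p))$, i.e.\ it must adapt to the value of $t$ at $p$, not merely be a fixed continuous positive function of $p$. The clean way to handle this is a change of variable on the target: choose a smooth strictly increasing diffeomorphism $\psi\colon\mathbb{R}\to\mathbb{R}$ whose derivative is small enough (in terms of $\epsilon$) that controlling $\vert\psi\circ\tau-\psi\circ t\vert$ by a fixed small constant forces $\vert\tau-t\vert<\epsilon(t)$ pointwise. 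Since $t$ is isotone iff $\psi\circ t$ is, and $\tau$ is temporal iff $\psi^{-1}\circ$(temporal) is — $\psi$ being a diffeomorphism with positive derivative — it suffices to prove the theorem for $\psi\circ t$ with a \emph{constant} error bound, which is the already-known density statement once one produces the right $\underline F,\overline F$.

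First I would make $t$ strictly increasing along causal curves with a quantitative margin. Pick any time function $\varphi\colon M\to(0,1)$ for $(M,C)$ (this exists since $(M,C)$ is stably causal), and replace $t$ by $t_\delta:=t+\delta\varphi$ for small $\delta>0$; this is still isotone, in fact a time function, and it is $g$-anti-Lipschitz for a suitable complete Riemannian metric $g$ because the $\delta\varphi$ summand alone has that property (the Fathi trick already used in the proof of Theorem~\ref{jpa}). Concretely, choosing the unit balls of $g$ large enough that they contain $(\dd(\delta\varphi))^{-1}(1)\cap\hat C$ for some intermediate locally Lipschitz proper $\hat C$ with $C<\hat C$, we get $\dd(\delta\varphi)(v)\ge\Vert v\Vert_g$ for $v\in\hat C$, hence along any $\hat C$-timelike curve $x$, $t_\delta(x(1))-t_\delta(x(0))\ge\int_x\Vert\dot x\Vert_g\,\dd\tau$. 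This is exactly the hypothesis of Theorem~\ref{moz} with $\hat C$ in place of $C$, $\underline F(v)=\Vert v\Vert_g$, and $\overline F$ absent.

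Now apply Theorem~\ref{moz} to $\tau_0:=\psi\circ t_\delta$ with the metric $h=g/4$ and with the continuous positive function $\alpha$ chosen so small that $\vert\hat\tau-\tau_0\vert<\alpha$ translates, after composing with $\psi^{-1}$, into $\vert\psi^{-1}\circ\hat\tau - t_\delta\vert<\tfrac12\epsilon(t(p))$, and then also $\delta$ small so that $\vert t_\delta-t\vert=\delta\varphi<\tfrac12\epsilon(t(p))$; for the first reduction one uses that $\psi$ can be chosen with $\psi'\le c$ on any bounded set and that $\epsilon$ is bounded below by a positive continuous function on compacta, so $\alpha$ can be taken of the form $\alpha(p)=$ (a small multiple of) $\inf_{\vert s-t_\delta(p)\vert\le 1}\epsilon(s)\cdot\psi'$. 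The output $\hat\tau$ is smooth with $\dd\hat\tau(v)\ge\Vert v\Vert_g - \Vert v\Vert_h = \Vert v\Vert_g/2>0$ for all $v\in C$, hence temporal for $(M,C)$; setting $\tau:=\psi^{-1}\circ\hat\tau$ gives a smooth function which is temporal (since $(\psi^{-1})'>0$) and satisfies $\vert\tau(p)-t(p)\vert<\epsilon(t(p))$ by the triangle inequality.

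I expect the main obstacle to be organizing the two-scale estimate cleanly: one must simultaneously choose $\psi$ (to linearize the $t$-dependent error into a constant one), $\delta$ (to absorb the $t$ vs.\ $t_\delta$ discrepancy), the metric $g$ (to get anti-Lipschitzness of $\delta\varphi$, which itself depends on $\delta$), and $\alpha$ (the smoothing tolerance), in a consistent order — the natural order being $\psi$, then $\delta$ together with $g$, then $\alpha$ — while checking that $\epsilon(t(p))$, being only locally bounded below, still dominates all the accumulated errors. A mild technical point is to verify that $\psi\circ t_\delta$ remains anti-Lipschitz for the \emph{same} $g$ up to the harmless factor $\inf\psi'$; one simply shrinks $g$'s balls further, or equivalently rescales, since $\psi\circ t_\delta$ along an $\hat C$-timelike curve increases by at least $\inf\psi'$ times the increase of $t_\delta$. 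Everything else is the routine machinery already deployed in the proofs of Theorems~\ref{jao} and~\ref{jpa}.
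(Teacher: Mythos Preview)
Your argument has a genuine gap at the step where you claim $t_\delta(x(1))-t_\delta(x(0))\ge\int_x\Vert\dot x\Vert_g\,\dd s$ along $\hat C$-timelike curves. From $\dd(\delta\varphi)(v)\ge\Vert v\Vert_g$ for $v\in\hat C$ you correctly obtain $\delta\varphi(x(1))-\delta\varphi(x(0))\ge\int_x\Vert\dot x\Vert_g\,\dd s$, but to pass to $t_\delta$ you also need $t(x(1))\ge t(x(0))$. That is precisely what is \emph{not} given: $t$ is isotone only for $C$, not for the strictly wider cone $\hat C$, and along $\hat C$-timelike curves whose tangent lies in $\hat C\setminus C$ the function $t$ may decrease, possibly by more than $\delta\varphi$ increases. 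The Fathi trick in Theorem~\ref{jpa} works because the function $\tau$ there is $\hat J_S$-isotone for the wider structure, hence genuinely non-decreasing along $\hat C$-causal curves; you have no such control. You could try to invoke a stability statement of the form ``a time function for $C$ is a time function for some wider $\hat C$'', but that is itself a non-trivial result that you neither cite nor prove, and even granting it you would need to iterate the perturbation to recover a quantitative anti-Lipschitz bound on the wider cone.

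The paper circumvents this by not attempting to smooth $t$ directly. It rescales via an expanding diffeomorphism $\sigma$ (essentially your $\psi^{-1}$), adds a time function to obtain $\tilde t=\sigma\circ t+g$, and then uses \cite[Thm.\ 3.14]{minguzzi17} to see that the sublevel and superlevel sets of $\tilde t$ are closed $J_S$-past and $J_S$-future sets. The separation Theorem~\ref{mos} is applied on the two overlapping families of unit intervals to produce smooth isotone functions $\tilde t_{\mathrm{even}}$, $\tilde t_{\mathrm{odd}}$, each within $1$ of $\tilde t$; their average, composed with $\sigma^{-1}$, gives $\tau$. The passage through $J_S$-monotone sets is exactly what supplies the monotonicity with respect to wider cones that your direct route lacks. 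As a secondary issue: if $\epsilon$ is unbounded then your $\psi$ must have $\inf\psi'=0$, so ``shrinking $g$'s balls'' by a constant factor cannot restore a global anti-Lipschitz bound for $\psi\circ t_\delta$; you would need a base-point dependent $\underline F$ and a finer pointwise argument, but this only becomes relevant once the main gap is repaired.
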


Clearly, if $t$ is Cauchy and $\epsilon$ is chosen bounded then $\tau$ is Cauchy.

\begin{proof}
Since $(M,C)$ is stably causal there is a time function $g\colon M\to (0,1)$. Let $\sigma\colon \mathbb{R}\to \mathbb{R}$ be a smooth function such that $\sigma'$ is increasing and
\[
2 \max(\sup{}_{[t,t+1]}\epsilon^{-1},1) <\sigma'(t).
\]
 The function $\tilde t=\sigma\circ t+g$ is a time function.

By \cite[Thm.\ 3.14]{minguzzi17} for each interval $[a,b]$ the set $\tilde t^{-1}((-\infty, a])$ is a closed $J_S$-past set and the set $\tilde t^{-1}([b, +\infty))$ is a $J_S$-future set.  By Thm.\ \ref{mos} we can find a smooth function $\varphi$ which is zero on the former set, one on the latter set, while with value in $(0,1)$ and temporal elsewhere. Then $\tau\colon \tilde t^{-1}([a,b])\to [a,b]$, $\tau=a+(b-a)\varphi$, is smooth, has value $a$ on $\tilde t^{-1}(a)$, $b$ in $\tilde t^{-1}(b)$, while it has value in $(a,b)$ and is temporal elsewhere. Thus we can redefine $\tilde t$ over $\tilde t^{-1}(\mathbb{R})=\cup_{k \in \mathbb{Z}} \tilde t^{-1}
\big(\frac{1}{2} [2k,2k+2]\big)$
by redefining it in the preimage of each single interval, so as to obtain a smooth function $\tilde  t_{even}$, such that $\vert \tilde t_{even}-\tilde t\vert<1$  which is temporal everywhere save for the points in $\cup_k \tilde t^{-1}(k)$. Similarly, writing $\tilde t^{-1}(\mathbb{R})=\cup_{k \in \mathbb{Z}} \tilde t^{-1}(\frac{1}{2} [2k+1,2k+3])$ we find a smooth function $\tilde t_{odd}$,  such that $\vert \tilde t_{odd}-\tilde t\vert<1$ and which is temporal everywhere save for the points in $\cup_k \tilde t^{-1}(k+\frac{1}{2})$. Then the function $\tau$ such that $\sigma \circ \tau=(\tilde t_{even}+\tilde t_{odd})/2$,  is the desired smooth temporal function. In fact, $\sigma\circ \tau$ is temporal (so $\tau$ is) and $\vert\sigma \circ \tau-\sigma\circ t  \vert\le\vert\sigma \circ \tau- \tilde t\vert+\vert \tilde t-\sigma\circ t  \vert <2$. If $t(p)\le \tau(p)$, $\sigma'(t(p))[\tau(p)-t(p)]\le \vert\sigma(\tau(p))-\sigma(t(p))  \vert<2$, which implies $\tau(p)-t(p)<\epsilon(t(p))$. If $t(p)>\tau(p)$ then since $\sigma'>2$, and $\sigma(t(p))-\sigma(\tau(p))<2$, we have $t(p)-\tau(p)<1$, thus $\sigma'(\tau(p))>2 \epsilon^{-1}(t(p))$, and we get $\sigma'(\tau(p))[t(p)-\tau(p)]\le \vert\sigma(\tau(p))-\sigma(t(p))  \vert<2$, which implies $t(p)-\tau(p)<\epsilon(t(p))$.
\end{proof}

Under stable causality we can find temporal functions adapted to stably acausal spacelike hypersurfaces.

\begin{theorem} \label{jcy}
Let $(M,C)$ be a stably causal closed cone structure. Let $S$ be a $C^{1,1}$ or a $C^k$, $2\le k\le \infty$ stably acausal spacelike hypersurface without edge (hence a partial Cauchy hypersurface), then there is a  smooth temporal function $t$ of the same regularity such that $S\subset t^{-1}(0)$.
\end{theorem}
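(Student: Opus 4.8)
The plan is to rerun the argument of Theorem~\ref{ngy} with stable causality of $(M,C)$ taking over the role of global hyperbolicity and the order-theoretic input of Theorems~\ref{jpa}--\ref{mos} replacing the Cauchy-development machinery. \emph{First} I would set the stage near $S$ as in the proofs of Theorems~\ref{ngy} and \ref{ngz}: since $S$ is stably acausal there is a locally Lipschitz proper $\tilde C>C$ for which $S$ is acausal, and since $S$ is $C^{1}$ and spacelike for $C$ a sufficiently narrow smoothing $C'$, $C<C'<\tilde C$, keeps $S$ spacelike (hence acausal), while taking $C'$ between $C$ and a causal cone structure makes $(M,C')$ stably causal. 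Fix an auxiliary Riemannian metric $g$ with $d_g(S,M\setminus U)\ge 2$ for the neighbourhood $U$ below, and a Lorentz--Finsler fundamental function $\mathscr{F}'$, $\mathscr{F}'{}^{-1}(0)=\p C'$, as regular as in Section~\ref{nog} and so large that $\mathscr{F}'(v)\ge\Vert v\Vert_g$ on $C$. Then $t:=d'_S$ is defined, finite and continuous on $W:=J'{}^{+}(S)\cup J'{}^{-}(S)$, it is a rushing time there (the proof of the proposition preceding Theorem~\ref{nja} is local and uses only that $S$ is spacelike acausal, not Cauchy), and by Theorem~\ref{nja} it has the regularity of $S$ on a neighbourhood $U\subset W$ of $S$, where moreover $\dd(d'_S)(v)\ge\mathscr{F}'(v)$ for every $C$-causal $v$, by the argument along an $\mathscr{F}'$-geodesic leading to~(\ref{jpo}).

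\emph{The core step} is to extend $t$ to a continuous isotone function on all of $M$. Because $(M,C)$ is stably causal the Seifert relation $J_S$ is a closed partial order and $(M,J_S)$ is perfectly normally ordered \cite{minguzzi11f}; since $J_S\subset J'$, the restriction of $d'_S$ to a closed neighbourhood $F=\{d_g(\cdot,S)\le 1.8\}\subset U$ of $S$ is continuous and $J_S$-isotone, so the extension theorem for (perfectly) normally ordered spaces \cite{minguzzi11f} (applied to $\arctan\circ\,d'_S|_F$, to stay bounded) yields a continuous $J_S$-isotone $\bar t\colon M\to\mathbb{R}$ with $\bar t=d'_S$ on $F$; in particular $\bar t$ is non-decreasing along every $C$-causal curve and $\bar t=0$ on $S$. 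I then upgrade $\bar t$ to an anti-Lipschitz function while preserving its values near $S$: fixing a bounded temporal function $\vartheta=\arctan\varphi$ for $(M,C')$ ($\varphi$ a smooth temporal function, so $|\vartheta|<\pi/2$ and $\vartheta$ is $g''$-steep for a Riemannian metric $g''\le g$), nested $g$-tubular neighbourhoods $V_i=\{d_g(\cdot,S)<r_i\}$ of $S$ with constant radii $r_0=1<r_1=1.5$, and a smooth cutoff $\chi$ with $\chi\equiv 0$ on $\overline{V_0}$, $\chi\equiv 1$ on $M\setminus V_1$, I set $\bar\tau:=\bar t+\epsilon\,\chi\,\vartheta$. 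Then $\bar\tau=d'_S$ on $V_0$, and for $\epsilon$ small a curve-by-curve estimate (splitting an $\hat C$-timelike curve, $C<\hat C<C'$, at the hypersurfaces $\{d_g(\cdot,S)=r_i\}$) shows that $\bar\tau$ satisfies the hypothesis of Theorem~\ref{moz} with $\overline F$ absent and $\underline F(v)=\epsilon_0\Vert v\Vert_{g''}$: where $\chi=0$ one uses the rushing property of $d'_S$ with $\mathscr{F}'\ge\Vert\cdot\Vert_g\ge\Vert\cdot\Vert_{g''}$; where $\chi\equiv 1$ one uses $g''$-steepness of $\vartheta$; and on the sub-arcs meeting the annulus $\textrm{supp}\,\dd\chi\subset\overline{V_1}\setminus V_0\subset F$ the only sign-indefinite contribution, coming from $\epsilon\,\chi\vartheta$, is bounded by $\epsilon\pi$, while the positive $g$-width $r_1-r_0$ of the annulus forces the $d'_S$-term there (bounded below by the $g$-length of the sub-arc) to dominate it once $\epsilon$ is small.

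\emph{Finally} I would smooth. Take closed and open neighbourhoods $A_*\subset G_*\subset V_0$ of $S$; on $G_*$ the function $\bar\tau$ equals the $C^{k}$ (resp.\ $C^{1,1}$) function $d'_S$, and there $\dd(d'_S)(v)\ge\mathscr{F}'(v)\ge\Vert v\Vert_{g''}\ge\epsilon_0\Vert v\Vert_{g''}-\Vert v\Vert_h$ for $\epsilon_0\le 1$ and $h$ small, i.e.\ (\ref{kid}) holds there. The penultimate statement of Theorem~\ref{moz}, applied with this $A_*$, $G_*$ and $\underline F$ (no $\overline F$), produces $\hat\tau$ which is smooth outside $G_*$, has the regularity of $S$ on $G_*$, coincides with $d'_S$ on $A_*$ — so $S\subset\hat\tau^{-1}(0)$ — and is temporal for $C$, since $\dd\hat\tau(v)\ge\epsilon_0\Vert v\Vert_{g''}-\Vert v\Vert_h>0$ on $C\setminus 0$; this is the desired function.

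\emph{The main obstacle} is the core step. When $S$ is not Cauchy its signed distance lives only on $W\subsetneq M$, and one cannot simply work inside $\textrm{int}\,D'{}^{+}(S)\cup\textrm{int}\,D'{}^{-}(S)$ and push a Cauchy temporal function of that region to the boundary: it need not extend continuously across the Cauchy horizons when $S$ is ``wild'' far away, since points of $H'{}^{+}(S)$ may lie arbitrarily close to $\overline{I'{}^{-}(S)}$. One is thus forced to use genuinely the order-theoretic normality of stably causal structures to globalise $d'_S$, and then to reconcile the merely isotone extension with the quantitative anti-Lipschitz input of Theorem~\ref{moz}; controlling the derivative of the interpolating cutoff — whence the $g$-tubular neighbourhoods of positive width and the a priori bounded auxiliary temporal function — is the technical heart of the proof.
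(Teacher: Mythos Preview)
Your route is genuinely different from the paper's. The paper does not attempt to extend $d'_S$ across $M$ by an order-theoretic Tietze theorem; instead it enlarges $S$ to a maximal $\tilde C$-achronal set $\Sigma$ (edgeless, hence a topological hypersurface), works inside the globally hyperbolic region $\hat D(\Sigma)$ where $\Sigma$ is Cauchy, applies Theorem~\ref{ngy} there (with the neighbourhoods $A,V,U$ taken around $S$, where $d'_\Sigma=d'_S$ is $C^k$) to obtain a Cauchy temporal $\tau$ on $\hat D(\Sigma)$ with $S\subset\tau^{-1}(0)$, squashes it via a bounded non-decreasing $\sigma$ so that $\sigma\circ\tau$ extends by the constants $\pm 1$ to all of $M$, and finally adds two corrections $\tau^{+},\tau^{-}$ produced by Theorem~\ref{jpa} (applied to the stable past/future sets $\{\tau\le 0\}$ and $\{\tau\ge 0\}$) so that $t=\sigma\circ\tau+\tau^{+}+\tau^{-}$ is temporal everywhere. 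Your plan --- localise $d'_S$, Nachbin-extend it to a global isotone $\bar t$, perturb by $\epsilon\chi\vartheta$ to force anti-Lipschitzness, then apply Theorem~\ref{moz} once --- trades the maximal-achronal-extension and the three-summand patching for a single smoothing, at the price of an ordered Tietze extension and a more delicate annulus estimate.

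There is one concrete slip. You extend $d'_S|_F$ in the order $(M,J_S)$, $J_S$ being the Seifert relation of $(M,C)$, but Theorem~\ref{moz} demands the anti-Lipschitz inequality along $\hat C$-timelike curves for some $\hat C>C$, and in general $\hat J\not\subset J_S$; hence your $\bar t$ need not be non-decreasing along the very curves you later split and estimate (the ``$[\bar t]\ge 0$'' step on the region $\chi\equiv 1$ is then unjustified). The repair is cheap: carry out the extension in $(M,\hat J_S)$ for your intermediate $\hat C$ with $C<\hat C<C'$. Since $C'>\hat C$ is causal, $(M,\hat C)$ is stably causal and $(M,\hat J_S)$ is perfectly normally ordered; the inclusion $\hat J_S\subset J'$ (take $C''=C'$ in the defining intersection) shows $d'_S|_F$ is $\hat J_S$-isotone; and $\hat J\subset\hat J_S$ then makes the extended $\bar t$ isotone along $\hat C$-curves, so your cutoff argument goes through. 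A secondary caution: make sure the ordered Tietze extension you invoke (for a non-compact normally preordered space, bounded continuous isotone data on a closed set) is really available in \cite{minguzzi11f} or cite Nachbin directly --- the present paper only uses the Urysohn-type separation of that reference, not an extension theorem.
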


The stable acausality condition cannot be deduced from the other assumptions and is necessary (because the cones can be widened preserving the temporality of $t$).

\begin{proof}
 Let $\tilde C>C$ be a stably causal locally Lipschitz proper cone structure such that $S$ is acausal for $\tilde C$.
Let $\hat C$, $C<\hat C<\tilde C$ be an intermediate locally Lipschitz proper cone structure. Since $S$ is $\tilde C$-achronal there is a maximal $\tilde C$-achronal set $\Sigma$, $M=\Sigma \cup \tilde I^+(\Sigma)\cup \tilde I^-(\Sigma)$. The removal of $\Sigma$ disconnects the spacetime.
 The $\tilde C$-achronality of $\Sigma$ implies that it is $\hat C$-acausal \cite[Thm.\ 2.24]{minguzzi17}. The maximal achronality  implies that it does not have edge and that it is a topological hypersuface (see also  \cite[Thm.\ 2.87]{minguzzi18b}).
On the globally hyperbolic spacetime $\hat D(\Sigma)$ we consider a signed distance $d'_{\Sigma}$ for some locally Lipschitz proper Lorentz-Finsler structure with $\hat C<C'<\tilde C$. Then we proceed as in the proof of Thm.\ \ref{ngy} but with $A$, $V$, $U$, neighborhoods of $S$  (rather than $\Sigma$) all contained in a neighborhood of $S$ where $d'_S$ is $C^k$. Thus we obtain a $C^k$ Cauchy temporal function $\tau$ such that $\tau^{-1}(0)\supset S$. Let $\sigma(x)$,  be a smooth non-decreasing function equal to $x$ in a neighborhood of 0, equal to 1 for $x\ge 1$, and equal to $-1$ for $x\le -1$. The function $\sigma\circ \tau$ can be extended all over $M$ to a $C^k$ isotone function by setting $\sigma\circ \tau=\pm 1$ in $\tilde I^\pm(\Sigma)\backslash \hat D(\Sigma)$. The set $P:=\{q: \tau(q)\le 0\}$ is a stably past set thus by Thm.\ \ref{jpa} there is a smooth function $\tau^+$, temporal on $M\backslash P$, which vanishes on $P$. Similarly there is a smooth function $\tau^-$ that vanishes on the stable future set $F:=\{q: \tau(q)\ge 0\}$ and is temporal in $M\backslash F$. The function $t=\tau+\tau^++\tau^-$ is the desired smooth temporal function.
\end{proof}

By using the same set of ideas it is possible to prove the following result.
\begin{theorem} \label{jky}
Let $(M,C)$ be a stably causal closed cone structure. Let $S_0$ and $S_1$ be two $C^{1,1}$ or two $C^k$, $2\le k\le \infty$, stably acausal spacelike hypersurfaces without edge (hence  partial Cauchy hypersurfaces), such that $S_1\subset J^+_S(S_0)\backslash S_0$. Then there is a  smooth temporal function $t$ of the same regularity such that $S_0\subset t^{-1}(0)$ and $S_1\subset t^{-1}(1)$.
\end{theorem}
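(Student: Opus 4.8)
The plan is to run the argument of Theorem~\ref{jcy} with two achronal leaves and two signed distances, and to conclude, as there and in Theorem~\ref{ngy}, by one application of the smoothing theorem~\ref{moz} in its refined form (coincidence with a prescribed function on a closed $A$ inside an open $G$, regularity of the input preserved on $G$, smoothness off $G$).

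\emph{Room and leaves.} By the stability of stable causality one finds a stably causal locally Lipschitz proper cone structure $\tilde C>C$ for which both $S_0$ and $S_1$ are acausal (each $S_i$ is acausal for some $C_i>C$, and acausality survives narrowing, so take $\tilde C$ below the pointwise intersection of the $C_i$ and of a stably causal widening). Picking an intermediate locally Lipschitz proper $C<C'<\tilde C$ and using the push-up $C'\text{-}J^+(q)\setminus\{q\}\subset\tilde C\text{-}I^+(q)$, the hypothesis $S_1\subset J^+_S(S_0)\setminus S_0$ sharpens to $S_1\subset\tilde I^+(S_0)$; using moreover that $S_0,S_1$ are $J_S$-acausal (being stably acausal) and that $J_S$ is antisymmetric, one extends $S_0$ to a maximal $\tilde C$-achronal set $\Sigma_0\supset S_0$ and $S_1$ to a maximal $\tilde C$-achronal set $\Sigma_1\supset S_1$ chosen so that $S_1\subset\tilde I^+(\Sigma_0)$ and $S_0\subset\tilde I^-(\Sigma_1)\cup\Sigma_1$ (for the latter one lets the leaf through $S_1$ pass below the part of $S_0$ not causally preceding $S_1$). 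Both leaves are edgeless topological hypersurfaces and $\hat D(\Sigma_0)$, $\hat D(\Sigma_1)$ are globally hyperbolic.

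\emph{The interpolating rushing time.} As in Section~\ref{nog}, take sufficiently regular locally Lipschitz proper Lorentz--Finsler structures $\mathscr{F}_0,\mathscr{F}_1$ with cone structures strictly between $C$ and $\tilde C$, chosen $\mathscr{F}$- and $h$-large as usual, and in addition (using only that $S_1$ is closed) with $\mathscr{F}_0$ scaled so large that the signed distance $d'_{\Sigma_0}$ from $\Sigma_0$, formed on $\hat D(\Sigma_0)$, exceeds $2$ on a neighborhood of $S_1$. Near $S_0$ this $d'_{\Sigma_0}$ equals the honest signed distance $d_{S_0}$ from $S_0$, hence is $C^{1,1}$ or $C^k$ there by Theorem~\ref{nja}; likewise the signed distance $d'_{\Sigma_1}$ from $\Sigma_1$, formed on $\hat D(\Sigma_1)$, equals $d_{S_1}$ near $S_1$ and has there the regularity of $S_1$. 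Extend $\sigma_0\circ d'_{\Sigma_0}$ and $\sigma_1\circ d'_{\Sigma_1}$ to all of $M$ by the $\pm$const device of the proof of Theorem~\ref{jcy}, where $\sigma_0,\sigma_1$ are fixed increasing cutoffs with $\sigma_i(x)=x$ near $0$, $\sigma_0$ bounded above by a constant $\ge 2$, and $\sigma_1$ bounded below by $-1+2\delta$ for a small $\delta>0$ and bounded above by a constant; then set
\[
\tau:=\min\bigl(\sigma_0\circ d'_{\Sigma_0},\;1+\sigma_1\circ d'_{\Sigma_1}\bigr).
\]
A minimum of continuous rushing times is again a continuous rushing time (anti-Lipschitz for a locally Lipschitz proper $\hat C>C$, with a positive homogeneous lower bound $\min(\mathscr F_0,\mathscr F_1)$), and the choices of $\sigma_0,\sigma_1$ and of the scale of $\mathscr{F}_0$ make $\tau$ equal to $d_{S_0}$ on a neighborhood $V_0$ of $S_0$ and to $1+d_{S_1}$ on a neighborhood $V_1$ of $S_1$; in particular $\tau|_{S_0}=0$ and $\tau|_{S_1}=1$.

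\emph{Smoothing and conclusion.} Since $\tau$ satisfies the sandwich hypothesis of Theorem~\ref{moz} for $\tilde C$-timelike curves and is already $C^{1,1}$ or $C^k$ and satisfies~(\ref{kid}) on the open neighborhood $G:=V_0\cup V_1$ of $S_0\cup S_1$ (there it equals $d_{S_0}$, resp.\ $1+d_{S_1}$, and one argues as in Theorem~\ref{ngy} from the rushing-time inequality that~(\ref{kid}) holds), Theorem~\ref{moz} applied with $G$ and a closed neighborhood $A\subset G$ of $S_0\cup S_1$ gives a function $t$, temporal, $h$- and $\mathscr{F}$-steep in the usual way, coinciding with $\tau$ on $A$, of the regularity of $\tau$ on $G$, and smooth off $G$. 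Hence $t$ has the regularity of $S_i$ near $S_i$, is smooth elsewhere, and $S_0\subset t^{-1}(0)$, $S_1\subset t^{-1}(1)$; shrinking $G$ realizes the freedom in the neighborhood $V$. The main obstacle is the construction of $\tau$: one must order the two achronal extensions correctly — this is precisely where $S_1\subset J^+_S(S_0)$, the $J_S$-acausality of the $S_i$, and the antisymmetry of $J_S$ enter — and one must tune the Lorentz--Finsler scales and the cutoffs so that the minimum selects, on full neighborhoods of $S_0$ and of $S_1$, the branch which is $C^k$ there; the non-compactness of $S_0,S_1$ is otherwise harmless, being absorbed by the cone widening exactly as in Theorems~\ref{ngy} and~\ref{jcy}. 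The remaining verifications (the differential inequality on $G$, the steepness estimates, the identity $t^{-1}(\{0,1\})\supset S_0\cup S_1$) are routine and parallel Theorem~\ref{jcy}.
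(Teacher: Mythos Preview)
There is a genuine gap. Your function
\[
\tau=\min\bigl(\sigma_0\circ d'_{\Sigma_0},\;1+\sigma_1\circ d'_{\Sigma_1}\bigr)
\]
is \emph{not} a rushing time on $M$, so Theorem~\ref{moz} cannot produce a temporal function from it. The issue is that the two branches you feed to the minimum are themselves not rushing times: once you compose $d'_{\Sigma_i}$ with the cutoffs $\sigma_i$ and extend by constants outside $\hat D(\Sigma_i)$ (the ``$\pm$const device''), each branch becomes merely isotone, not anti-Lipschitz, wherever $\sigma_i'$ vanishes or outside the domain of dependence. Concretely, take two $\hat C$-causally related points $p\ne q$ both lying in $\tilde I^-(\Sigma_0)\setminus\hat D(\Sigma_0)$; there both branches, and hence $\tau$, are locally constant, so $\tau(q)-\tau(p)=0$ while the lower bound you claim, $\int\min(\mathscr F_0,\mathscr F_1)$, is strictly positive. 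Your sentence ``a minimum of continuous rushing times is again a continuous rushing time'' is correct as an abstract statement, but the inputs here are not rushing times. In a merely stably causal spacetime the regions outside $\hat D(\Sigma_0)\cup\hat D(\Sigma_1)$ are in general nonempty, so this cannot be dismissed.

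This is precisely why the proof of Theorem~\ref{jcy} does \emph{not} stop at $\sigma\circ\tau$: it adds the correctors $\tau^+,\tau^-$, built via Theorem~\ref{jpa}, which are temporal exactly where $\sigma\circ\tau$ fails to be. The paper's proof of the present theorem follows the same philosophy: one sums five smooth isotone functions, each temporal on an open set (roughly: two coming from signed distances, carrying the $C^k$ regularity and the values $0$ and $1$ on $S_0$ and $S_1$; and three separating functions from Theorem~\ref{jpa}/\ref{mos}, temporal respectively in the past of $S_0$, between $S_0$ and $S_1$, and in the future of $S_1$), so that the sum is temporal everywhere. Your single-smoothing route would need, at minimum, to graft such correctors onto the output of Theorem~\ref{moz}; but then the bookkeeping to preserve $t(S_0)=0$, $t(S_1)=1$ leads you back to essentially the five-summand construction.
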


In the proof one has to sum five smooth isotone functions which are temporal on suitable open sets.
%\begin{proof}
%We only sketch the proof. First extend $S_0$ and $S_1$ to two hypersurfaces $\Sigma_0$ and $\Sigma_1$ so that the latter is in the future of the former.
%\end{proof}

\section{Conclusions}
By using an approach complementary to that proposed by Bernard and Suhr \cite{bernard19}, we
 generalized the  results by Bernal and S\'anchez  \cite{bernal06}  to the framework of closed cone structures.
%We have shown that the results of \cite{bernal06} can be generalized to the closed cone structure case.
Preliminarily we showed that the spacelike Cauchy hypersurfaces are stable and that the  signed distance $d_S$ from a spacelike hypersurface $S$ is, in a neighborhood of it, as regular as the  hypersurface. Then we applied our smoothing theorem to the signed distance function of a Lorentz-Finsler structure having wider cones. The statements derived from our approach are rather detailed, in fact they are useful already in the Lorentzian metric theory, and the  proofs are conceptually rather straightforward.  Furthermore, they nicely  pass through concepts, such as the signed distance, that deserved to be investigated.

\section*{Acknowledgements}
I thank James Vickers and G\"unther H\"ormann for some interesting questions that motivated this investigation. Work partially supported by GNFM of INDAM.
%, for with  their questions they motivated me in pursuing this investigation.

%\bibliography{../../bibliografie/simultaneity,../../bibliografie/libri,../../bibliografie/miei,../../bibliografie/mieiPrep,../../bibliografie/mieiProc}
%\bibliographystyle{plain}

\end{document}